\newcommand{\bfk}{\mathbf{k}}
\newcommand{\be}{\begin{equation}}
\newcommand{\ee}{\end{equation}}
\newcommand{\bea}{\begin{eqnarray}}
\newcommand{\eea}{\end{eqnarray}}
\def\bml{\begin{subequations}}
\def\blea{\bml\begin{eqnarray}}
\def\eml{\end{subequations}}
\def\elea{\end{eqnarray}\eml}
\renewcommand{\sec}{Sec$.$~}
\newtheorem{theorem}{Theorem}
\newtheorem{corollary}[theorem]{Corollary}
\def\fmax{\phi_{\text{max}}}
\begin{document}
	\title{A singularity theorem for Einstein--Klein--Gordon theory}
	
	\author{Peter J. Brown}
	\author{Christopher J. Fewster}
	\author{Eleni-Alexandra Kontou}
	\affiliation{Department of Mathematics, University of York, Heslington, York YO10 5DD, United Kingdom}

	\begin{abstract}

		Hawking's singularity theorem concerns matter obeying the strong energy condition (SEC), which means that all observers experience a nonnegative effective energy density (EED), thereby guaranteeing the timelike convergence property. However, there are models that do not satisfy the SEC and therefore lie outside the scope of Hawking's hypotheses, an important example being the massive Klein--Gordon field.
		Here we derive lower bounds on local averages of the EED for solutions to the Klein--Gordon equation, allowing nonzero mass and nonminimal coupling to the scalar curvature. The averages are taken along timelike geodesics or over spacetime volumes, and our bounds are valid for a range of coupling constants
		including both minimal and conformal coupling. Using methods developed by Fewster and Galloway, 
		these lower bounds are applied 
		to prove a Hawking-type singularity theorem for solutions to the Einstein--Klein--Gordon theory, asserting that solutions with sufficient initial contraction at a compact Cauchy surface will be future timelike geodesically incomplete.\\  
		
		\noindent\emph{Dedicated to the memory of S.W.~Hawking}

	\end{abstract}

	\maketitle
	
	\noindent

	\section{Introduction}
	\label{sec:introduction}
	
	The conditions under which cosmological models either originate or terminate in a singularity provided an active subject of research in the decades prior to the breakthroughs made by Penrose~\cite{Penrose_prl:1965} and Hawking~\cite{Hawking:1966sx}. Results from that era mainly concern solutions with symmetries, as represented by the survey~\cite{HeckmannSchuecking:1962}. Raychaudhuri's work in 1955 represented a decisive step forward, because he was able to analyse inhomogeneous models using (a forerunner of) the equation that now carries his name. In their general
	and modern form~\cite{Ehlers:1993} the Raychaudhuri equations present the
	evolution of timelike geodesic congruences in a physically transparent fashion. For the special case of
	an irrotational congruence with velocity field $U^\mu$, the expansion $\theta=\nabla_\mu U^\mu$ satisfies
	\be
	\label{eqn:ray0}
	\nabla_U\theta =R_{\mu \nu} U^\mu U^\nu-2\sigma^2 -\frac{\theta^2}{n-1}  \,,
	\ee
	where $n$ is the spacetime dimension, $\sigma$ is the shear scalar and $R_{\mu\nu}$ is the Ricci tensor. Assuming that the 
	geometry is a solution to the Einstein equations
	\be \label{eqn:Einstein}
	G_{\mu\nu} = - 8\pi T_{\mu\nu},
	\ee
	the Raychaudhuri equation \eqref{eqn:ray0} becomes
	\be
	\label{eqn:ray1}
	\nabla_U\theta=-8\pi\rho -2\sigma^2 -\frac{\theta^2}{n-1}  \,,
	\ee
	where
	\begin{equation}\label{eq:EED_def}
		\rho=  T_{\mu \nu} U^\mu U^\nu-\frac{T}{n-2}  
	\end{equation}
	and $T=T^{\mu}_{\phantom{\mu}\mu}$. The quantity $\rho$ has appeared in general relativity
	since the works of Whittaker~\cite{Whittaker:1935} and Synge~\cite{Synge:1937}, playing
	the role of the mass-energy density in general relativistic versions of the Gauss law; Pirani~\cite{Pirani:2009} likewise identifies it as the `effective density of gravitational mass'.
	Here, imputing units of energy rather than mass, we will use the term \emph{effective energy density} (EED) for $\rho$.  
	Evidently, the sign of $\rho$ is crucial. If $\rho\ge 0$, that is, if the strong energy condition (SEC) holds, then the right-hand side of \eqref{eqn:ray1} is negative, driving $\theta\to -\infty$ in finite proper time. This is incompatible with geodesic completeness and implies the existence of a singularity.  
	
	Senovilla~\cite{Senovilla:2014} has described the skeleton of the singularity theorems in terms of a `pattern theorem' with three ingredients. An \emph{energy condition} establishes a focussing effect for geodesics, while a \emph{causality condition} removes the possibility of closed timelike curves and a \emph{boundary or initial condition} establishes the existence of some trapped region of spacetime. The goal of the singularity theorems is to show that the spacetime contains at least one incomplete causal geodesic; 
	we will divide singularity theorems into `Hawking-type' and `Penrose-type',
	depending on whether they demonstrate timelike or null geodesic incompleteness respectively. While Hawking-type results are based on the SEC, Penrose-type results assume the null energy condition (NEC), $T_{\mu\nu}k^\mu k^\nu\ge 0$ for all null $k^\mu$. 
	
	Hawking wrote that `[the energy conditions] are properties that any normal matter should have'~\cite[\S 5]{Hawking:1966sx} and indeed many models do respect the SEC. However, not all do, and in fact the massive minimally coupled Klein--Gordon field obeying $(\Box+m^2)\phi=0$ has EED
	\begin{equation}
		\rho = (\nabla_U \phi)^2 -\frac{m^2\phi^2}{n-2},
	\end{equation}
	which is easily made negative at individual points. Similarly, it is easily seen that the SEC and NEC fail for the nonminimally coupled Klein--Gordon field. This situation is exacerbated in quantum field theory, in which none of the pointwise energy conditions can hold~\cite{Epstein:1965zza}. We refer the reader to recent reviews of energy conditions \cite{Curiel:2017,MartinMorunoVisser:2017}.
	
	For these reasons there has long been interest in establishing singularity theorems under weakened energy assumptions. Examples include \cite{Tipler:1978zz,ChiconeEhrlich:1980,Borde:1987b, Roman:1988vv,Wald:1991xn,Borde:1994}, in which various averages of the energy density or related quantities are
	required to be nonnegative if the average is taken over a sufficiently large portion of a (half-)complete causal geodesic, or at least is intermittently nonnegative~\cite{Borde:1987b}.  
	Our approach in this paper follows~\cite{Fewster:2010gm}, in which (generalising results from~\cite{Galloway:1981}) it was shown among other things that suitable lower bounds on local weighted averages of $\rho$ are sufficient to derive singularity theorems of Hawking and Penrose type, even if $\rho$ is not everywhere positive or has a negative long-term average.
	
	The bounds adopted in~\cite{Fewster:2010gm} were inspired by the Quantum Energy Inequalities (QEIs) that have been established in various models of quantum field theory (see~\cite{Fewster2017QEIs} for a recent review). However, there is a significant gap between the results of~\cite{Fewster:2010gm} and a semiclassical Hawking-type singularity theorem, because there is so far no QEI version of the SEC. The purpose of this paper is to show that the classical nonminimally coupled massive Klein--Gordon field obeys lower bounds on $\rho$ of the type considered in~\cite{Fewster:2010gm}. The general approach is parallel to methods used in~\cite{Fewster:2006ti} to obtain averaged versions of the weak and null energy conditions for the classical nonminimally coupled scalar field. Elsewhere, we will use our results to establish QEI analogues of the SEC (cf.~\cite{Fewster:2007ec}); here, we use them to derive a new Hawking-type singularity theorem for the Einstein--Klein--Gordon system. In a completely different direction, we mention that the methods of~\cite{Fewster:2010gm}, and therefore bounds of the type developed here, could be used in other problems in relativity. See, for example~\cite{Lesourd:2017}, in which a version of Hawking's area theorem is proved under weakened hypotheses. 
	
	The paper is structured as follows. In \sec\ref{sec:nonmfield} we recall the energy-momentum tensor for the non-minimally coupled scalar field and the manner in which it can violate the pointwise SEC.  Next, in \sec\ref{sec:worldline}, we consider local averages of $\rho$ of the form
	\begin{equation}
		\int\rho(\gamma(\tau)) f(\tau)^2 \,d\tau,
	\end{equation}
	where $\gamma$ is a timelike geodesic parameterised by proper time and $f$ is a real-valued smooth and compactly supported function. Here, it not is assumed that the background spacetime and field together solve the Einstein--Klein--Gordon equations.
	We derive lower bounds on such averages that depend only the values of $\phi$, but not its derivatives. The bounds also depend on $\gamma$ and $f$ together with its derivatives and are valid for all values of the coupling $\xi$ in the interval $[0,2\xi_c]$ where $\xi_c$ is the conformal coupling constant ($\xi_c=1/6$ for $n=4$). We investigate the behaviour of these lower bounds under scaling 
	of $f$ and also derive constraints on the time for which $\rho$ can be more negative than some given value.  Section~\ref{sec:worldvolume} addresses similar questions for worldvolume averages of $\rho$ obtaining bounds valid on an interval containing $[0,\xi_c]$ for dimensions $n\ge 4$. In the special case of flat spacetime, one
	may prove that the average value of $\rho$ over all spacetime is nonnegative. 
	In \sec\ref{sec:ekg}, we return to worldline bounds, now adapted to the special case of solutions to the Einstein--Klein--Gordon system and obtaining a slightly refined bound, which is used in our discussion of singularity theorems in \sec\ref{sec:singularity}. There, we first establish a Hawking-type singularity theorem using methods taken from~\cite{Fewster:2010gm} and apply it to the Einstein--Klein--Gordon theory using our worldline bounds. This provides an 
	analogue to the Penrose-type singularity theorem for the nonminimally coupled 
	scalar field discussed in~\cite{Fewster:2010gm}. Finally, we conclude in \sec~\ref{sec:disc} with 
	a discussion of the magnitude of the initial contraction needed to ensure
	timelike geodesic incompleteness according to our results. 
	
	Our sign conventions are the $[-,-,-]$ of Misner, Thorne and Wheeler \cite{MTW}.  We write the d'Alembertian with respect to the metric $g$ as $\Box_g = g^{\mu\nu}\nabla_\mu\nabla_\nu$ and work in $n$ spacetime dimensions unless otherwise stated. Except in \sec\ref{sec:disc} we adopt units in which $G=c=1$. 
	
	\section{The non-minimally coupled field}
	\label{sec:nonmfield}
	
	The field equation for non-minimally coupled scalar fields is
	\be \label{eqn:field}
	(\Box_g+m^2+\xi R)\phi=0 \,,
	\ee
	where $\xi$ is the coupling constant and $R$ is the Ricci scalar. 
	The constant $m$ has dimensions of inverse length, which would be the
	inverse Compton wavelength if one regarded~\eqref{eqn:field} as the
	starting-point for a quantum field theory with massive particles.
	The Lagrangian is
	\be
	\label{eqn:lagrangian}
	L[\phi]=\frac{1}{2} [(\nabla \phi)^2-(m^2+\xi R)\phi^2 ] \,,
	\ee
	from which the stress energy tensor is obtained by varying the action with respect to the metric, giving
	\be
	\label{eqn:tmunu}
	T_{\mu \nu}=(\nabla_\mu \phi)(\nabla_\nu \phi)+\frac{1}{2} g_{\mu \nu} (m^2 \phi^2-(\nabla \phi)^2)+\xi(g_{\mu \nu} \Box_g-\nabla_\mu \nabla_\nu-G_{\mu \nu}) \phi^2 \,,
	\ee
	where $G_{\mu \nu}$ is the Einstein tensor.  The trace of the stress-energy tensor is given by
	\be
	\label{eqn:trace}
	T = \left(1-\frac{n}{2}\right) (\nabla\phi)^2 + \frac{n}{2}m^2\phi^2 + \xi\left((n-1)\Box_g -  \left(1-\frac{n}{2}\right)R \right)\phi^2 \,.
	\ee
	We should observe here that the field equation and the Lagrangian reduce to those of minimal coupling for flat spacetimes but the stress energy tensor does not. The effective energy density $\rho$ of
	Eq.~\eqref{eq:EED_def}  obtained from the stress-energy tensor Eq.~(\ref{eqn:tmunu}) for a timelike observer with $4$-velocity $U^\mu$ is
	\bea \label{eqn:SED}
	\rho &=&  (1-2\xi) U^\mu U^\nu (\nabla_\mu \phi)(\nabla_\nu \phi)-\frac{1-2\xi}{n-2}  m^2 \phi^2 -\frac{2\xi}{n-2} (\nabla \phi)^2  \nonumber\\
	&&\quad -2\xi U^\mu U^\nu \phi \nabla_\mu \nabla_\nu  \phi -\xi U^\mu U^\nu R_{\mu \nu} \phi^2+\frac{2\xi^2}{n-2}  R \phi^2-\frac{2\xi}{n-2} (\phi P_\xi \phi)    \,,
	\eea
	where $P_\xi=\Box_g+m^2+\xi R$ is the Klein-Gordon operator. The last term can be discarded ``on shell''  i.e. for $\phi$ satisfying  Eq.~(\ref{eqn:field}). For $\xi=0$ the EED further reduces to
	\be \label{eqn:minSED}
	\rho= U^\mu U^\nu (\nabla_\mu \phi)(\nabla_\nu \phi)-\frac{1}{n-2} m^2 \phi^2  \,.
	\ee
	From Eq.~\eqref{eqn:minSED} we can see that, even for minimally coupled fields, we find a violation of the SEC at any point in spacetime at which $ m^2\phi^2 \geq (n-2) (\nabla_U\phi)^2$, and the violation can be made arbitrarily large if $m$ or $\phi$ can be made large. We also observe a guaranteed violation whenever the field derivatives vanish, as we are left with a manifestly negative term.

\section{Worldline strong energy inequality}
\label{sec:worldline}

We will study the EED  of the stress-energy tensor of Eq.~\eqref{eqn:tmunu} with respect to freely falling observers. Let $\gamma$ be a a timelike geodesic parametrised by proper time $\tau$.
Let $f$ be a real-valued and compactly supported function $f \in C_0^2(\mathbb{R})$. We are interested in expressions of the form
\be \label{eqn:clASED}
\int_\gamma d\tau \,\rho \, f^2(\tau)=\int_\gamma d\tau \left(  T_{\mu \nu}\dot{\gamma}^\mu \dot{\gamma}^\nu-\frac{1}{n-2} T  \right)f^2(\tau) \,.
\ee
Eq.~(\ref{eqn:clASED}) ``on shell" reduces to
\bea\label{eqn:onshell-clASED}
\int_\gamma d\tau \,\rho\, f^2(\tau)&=&\int_\gamma d\tau \bigg((1-2\xi) (\nabla_{\dot{\gamma}}\phi)^2-\frac{1-2\xi}{n-2}m^2 \phi^2 -\frac{2\xi}{n-2} (\nabla \phi)^2 \nonumber\\
&& \qquad \qquad \qquad \qquad -2\xi \phi (\nabla_{\dot{\gamma}}^2 \phi)-\xi \dot{\gamma}^\mu \dot{\gamma}^\nu R_{\mu \nu} \phi^2+\frac{2\xi^2}{n-2} R \phi^2 \bigg) f^2(\tau) \,.
\eea
From Eq.~(9) of Ref.~\cite{Fewster:2006ti} we have
\be \label{eqn:posdif}
-2\xi \int_\gamma d\tau\, f^2(\tau) \phi  \nabla_{\dot{\gamma}}^2 \phi=2 \xi \int_\gamma d\tau  [\nabla_{\dot{\gamma}} (f(\tau) \phi)]^2-2\xi  \int_\gamma d\tau \phi^2 ( f'(\tau))^2 \,,
\ee
which is a difference of positive terms for positive coupling constant. Additionally we can write
\be
(1-2\xi) (\nabla_{\dot{\gamma}}\phi)^2-\frac{2\xi}{n-2} (\nabla \phi)^2=\left(1-\frac{\xi}{2\xi_c}\right) (\nabla_{\dot{\gamma}}\phi)^2+\frac{2\xi}{n-2} h^{\mu \nu} \nabla_\mu \phi \nabla_{\nu} \phi \,,
\ee
where $h^{\mu \nu}=\dot{\gamma}^\mu \dot{\gamma}^\nu-g^{\mu \nu}$ is a positive definite metric and $\xi_c$ is the conformal coupling constant defined as
\be
\xi_c=\frac{n-2}{4(n-1)} \,.
\ee
Applying the previous two identities to Eq.~\eqref{eqn:onshell-clASED}, we find that all the curvature independent terms are either positive or negative for $\xi \in[0,2 \xi_c]$. As a result, we have proved the following theorem.

\begin{theorem} 
\label{the:clline}
Let $\gamma$ be a timelike geodesic parametrized by proper time $\tau$ in $(M,g)$, where $M$ is a manifold with dimension $n\geq 2$. Let $T_{\mu \nu}$ be the stress-energy tensor of a scalar field with coupling constant $\xi \in[0,2\xi_c]$ and $f$ a real valued function of compact support. Then ``on shell''
\bea
\label{eqn:cline}
\int_\gamma d\tau \, \rho\, f^2(\tau)&\geq& - \int_\gamma d\tau \bigg\{ \frac{1-2\xi}{n-2} m^2 f^2(\tau)+\xi \bigg(2( f'(\tau))^2+R_{\mu \nu} \dot{\gamma}^\mu \dot{\gamma}^\nu f^2(\tau) \nonumber\\
&& \qquad \qquad  - \frac{2\xi}{n-2} Rf^2(\tau) \bigg) \bigg\}\phi^2  \,.
\eea
\end{theorem}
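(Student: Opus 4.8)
The plan is to begin from the on-shell form~\eqref{eqn:onshell-clASED} of the averaged EED and reorganise its curvature-independent part into a sum of manifestly nonnegative pieces plus the terms that will constitute the stated bound. The two identities displayed just before the theorem are precisely the tools required: \eqref{eqn:posdif} disposes of the second-derivative term $-2\xi\phi\nabla_{\dot{\gamma}}^2\phi$ by integration by parts, and the subsequent decomposition handles the first-derivative combination $(1-2\xi)(\nabla_{\dot{\gamma}}\phi)^2-\frac{2\xi}{n-2}(\nabla\phi)^2$.

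First I would substitute~\eqref{eqn:posdif} into~\eqref{eqn:onshell-clASED}, replacing $-2\xi\int f^2\phi\nabla_{\dot{\gamma}}^2\phi$ by $2\xi\int[\nabla_{\dot{\gamma}}(f\phi)]^2-2\xi\int\phi^2(f')^2$, the first integral being nonnegative for $\xi\geq0$ and the second supplying the $-2\xi(f')^2\phi^2$ contribution to the bound. Then I would apply the decomposition to the kinetic terms, rewriting them as $\left(1-\frac{\xi}{2\xi_c}\right)(\nabla_{\dot{\gamma}}\phi)^2+\frac{2\xi}{n-2}h^{\mu\nu}\nabla_\mu\phi\nabla_\nu\phi$ with $h^{\mu\nu}$ positive definite. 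After these substitutions the integrand splits into three groups: derivative-squared terms carrying the coefficients $1-\frac{\xi}{2\xi_c}$, $\frac{2\xi}{n-2}$ and $2\xi$; the mass term $-\frac{1-2\xi}{n-2}m^2\phi^2f^2$; and the curvature terms $-\xi R_{\mu\nu}\dot{\gamma}^\mu\dot{\gamma}^\nu\phi^2 f^2+\frac{2\xi^2}{n-2}R\phi^2f^2$.

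The crux, and the only point needing real care, is the sign analysis of the three derivative coefficients. The factors $\frac{2\xi}{n-2}$ and $2\xi$ are nonnegative exactly when $\xi\geq0$, fixing the lower endpoint, while $1-\frac{\xi}{2\xi_c}\geq0$ holds exactly when $\xi\leq2\xi_c$, fixing the upper endpoint; together these single out the admissible interval $[0,2\xi_c]$. Since $h^{\mu\nu}$ is positive definite and each squared quantity is nonnegative, this entire group is pointwise $\geq0$ along $\gamma$, so discarding it can only decrease the integral. Dropping these three terms and collecting the mass and curvature remainder reproduces the right-hand side of~\eqref{eqn:cline} verbatim. I expect no genuine obstacle beyond this bookkeeping: the argument is a completion-of-squares of the type used in~\cite{Fewster:2006ti}, and it is exactly the requirement that the discarded terms keep the correct sign that dictates the range $[0,2\xi_c]$ and that correctly assigns the mass and curvature pieces to the bound rather than to the discarded positive terms.
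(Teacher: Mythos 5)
Your proposal is correct and is essentially identical to the paper's own argument: substitute the integration-by-parts identity \eqref{eqn:posdif} and the kinetic-term decomposition involving $h^{\mu\nu}$ into \eqref{eqn:onshell-clASED}, observe that the three derivative-squared terms with coefficients $1-\xi/(2\xi_c)$, $2\xi/(n-2)$ and $2\xi$ are nonnegative precisely for $\xi\in[0,2\xi_c]$, and discard them to obtain \eqref{eqn:cline}. Your sign bookkeeping, including the assignment of the $-2\xi(f')^2\phi^2$, mass and curvature terms to the bound, matches the paper exactly, so there is nothing to add.
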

In fact, we have proved a slightly stronger bound that will be useful later on, namely
\bea
\label{eqn:cline2}
\int_\gamma d\tau \, \rho\, f^2(\tau)&\geq& - \int_\gamma d\tau \bigg\{ \frac{1-2\xi}{n-2} m^2 f^2(\tau)+\xi \bigg(2( f'(\tau))^2+R_{\mu \nu} \dot{\gamma}^\mu \dot{\gamma}^\nu f^2(\tau) \nonumber\\
&& \qquad  - \frac{2\xi}{n-2} Rf^2(\tau) \bigg) \bigg\}\phi^2
+ \int_\gamma d\tau 
\left(1-\frac{\xi}{2\xi_c}\right) (\nabla_{\dot{\gamma}}\phi)^2 f^2(\tau)
\,.
\eea
However \eqref{eqn:cline} has the advantage that only the field $\phi$, and not its derivative, appears on the right-hand side.

For flat spacetimes the bound of Theorem \ref{the:clline} becomes 
\be \label{eqn:clasflat}
\int_\gamma d\tau \, \rho\, f^2(\tau) \geq - \int_\gamma d\tau \bigg\{ \frac{1-2\xi}{n-2} m^2 f^2(\tau) +2 \xi ( f'(\tau))^2 \bigg\}\phi^2 \,,
\ee
while for minimally coupled fields, regardless of the curvature
\be
\int_\gamma d\tau \, \rho\, f^2(\tau) \geq -  \frac{1}{n-2}m^2  \int_\gamma d\tau \, f^2(\tau) \phi^2 \,.
\ee
In order to understand the significance of these results, it is useful to discuss 
some consequences of the flat spacetime bound Eq.~\eqref{eqn:clasflat}. First, 
let us consider its behaviour under rescaling of the smearing function $f$. 
Writing $\fmax$ for the maximum field amplitude of the field along the inertial trajectory $\gamma$,
\be
\label{eqn:phimax}
\fmax=\sup_\gamma |\phi | \,,
\ee
Eq.~\eqref{eqn:clasflat} implies 
\be
\int_\gamma d\tau \, \rho\, f^2(\tau) \geq - \fmax^2 \int_\gamma d\tau \bigg\{ \frac{1-2\xi}{n-2} m^2 f^2(\tau) +2 \xi ( f'(\tau))^2 \bigg\} \,
\ee
for any compactly supported real-valued $f$. Let us now assume that $f$ has unit $L^2$-norm. Introducing the rescaled function 
\be
f_{\lambda}(\tau)=\frac{f(\tau/\lambda)}{\sqrt{\lambda}} \,,
\ee
chosen so that its normalization is independent of the choice of $\lambda>0$
\be
\int d\tau f^2_\lambda (\tau)= \int d\tau f^2 (\tau)=1 \,,
\ee
we can write
\be
\int_\gamma d\tau \, \rho \, f^2_\lambda (\tau) \geq -  \fmax^2 \int_\gamma d\tau \frac{1}{\lambda} \bigg\{ \frac{1-2\xi}{n-2} m^2 f^2(\tau/\lambda) +\frac{2\xi}{\lambda^2} (f'(\tau/\lambda))^2 \bigg\}   \,.
\ee
Changing variables to $\tau \to \tau \lambda$ on the right-hand side and taking the limit $\lambda \to \infty$ we get
\be
\liminf_{\lambda \to \infty} \int_\gamma d\tau \, \rho \, f^2_\lambda (\tau)  \geq -  \frac{1-2\xi}{n-2} m^2 \fmax^2 \,.
\ee
This result may be interpreted as providing a lower bound on the long-term average value of $\rho$, which leaves open the possibility that the long-term average in the case $m>0$ can be negative, even for $\xi=0$. This can be contrasted with analogous results for the null energy condition in~\cite{Fewster:2006ti}, which establish ANEC in an appropriate limit.

A slightly different approach is to estimate the supremum of the EED over an open interval $I$ of proper time with duration $\tau_0$. This gives 
\be\label{eq:longterm}
\sup_{\gamma(I)} \rho \geq - \left\{ \frac{1-2\xi}{n-2}m^2+\frac{2\xi \pi^2}{\tau_0^2} \right\} \sup_{\gamma(I)} |\phi|^2\,, 
\ee
where use the fact that 
\begin{equation}
\inf_f \frac{\|f'\|^2}{\|f\|^2}= \frac{\pi^2}{\tau_0^2}
\end{equation} 
where $\|\cdot\|$ denotes the $L^2$-norm and the infimum is 
taken over all smooth $f$ with compact support in an interval of length $\tau_0$ (see Ref.~\cite{Fewster:1998xn,Fewster:2006ti} for similar arguments). Thus violations of the 
SEC beyond the level of the long term average bound in Eq.~\eqref{eq:longterm} are possible only on timescales $\tau_0\ll \xi^{1/2}m^{-1}$, and not at all if $\xi=0$.

\section{Worldvolume strong energy inequality}
\label{sec:worldvolume}

Instead of averaging the EED over a worldline we can average over spacetime volumes. Let $U^\mu$ be a future-directed timelike unit vector field. Introducing $f(x)$ as a smearing function with compact support, and writing $V^\mu = f(x)U^\mu$, the averaged EED for the nonminimally coupled scalar field is ``on shell"
\bea \label{eqn:ASEDvol}
&& \int dVol \, \rho\, f^2(x)=\int dVol  \bigg\{ (1-2\xi) f^2(x) U^\mu U^\nu (\nabla_\mu \phi)(\nabla_\nu \phi) -\frac{1-2\xi}{n-2}  m^2 \phi^2 f^2(x)   \nonumber\\
&&\qquad \qquad -\frac{2\xi}{n-2} (\nabla \phi)^2 f^2(x)-\xi V^\mu V^\nu (2\phi \nabla_\mu \nabla_\mu \phi+ R_{\mu \nu} \phi^2)+\frac{2\xi^2}{n-2} R\phi^2 f^2(x) \bigg\} \,,
\eea
where $\rho$ is given by Eq.~\eqref{eqn:SED}. From Eq.~(34) of Ref.~\cite{Fewster:2006ti} we have
\bea \label{eqn:posdifvol}
-\xi \int dVol \, V^\mu V^\nu( 2\phi \nabla_\mu \nabla_\nu \phi+R_{\mu \nu} \phi^2)&& =2\xi \int dVol \, [\nabla_\mu (V^\mu \phi)]^2 \nonumber\\
&&-\xi \int dVol \, [(\nabla_\mu V^\mu)^2+(\nabla_\mu V^\nu)(\nabla_\nu V^\mu)] \phi^2 \,,
\eea
which is a generalization of Eq.~(\ref{eqn:posdif}) that was used for the worldline average. We can also write
\bea \label{eqn:posmetr}
(1-2\xi) U^\mu U^\nu (\nabla_\mu \phi)(\nabla_\nu \phi)-\frac{2\xi}{n-2} (\nabla \phi)^2&=&\left(1-2\xi\frac{n-1}{n-2}\right) U^\mu U^\nu (\nabla_\mu \phi)(\nabla_\nu \phi)\nonumber \\
&&\qquad \qquad +\frac{2\xi}{n-2} h^{\mu \nu} (\nabla_\mu \phi) (\nabla_\nu \phi)  
\,,
\eea
where $h^{\mu \nu}=U^\mu U^\nu - g^{\mu \nu}$ is a positive definite metric. Now all curvature-independent terms are either positive or negative for $\xi \in[0,2\xi_c]$, and we have the following bound for the averaged EED
\bea \label{eqn:volbound1}
\int dVol \, \rho\, f^2(x) &\geq& -\int dVol \bigg\{ \frac{1-2\xi}{n-2} m^2 f^2(x)+\xi  [(\nabla_\mu V^\mu)^2+(\nabla_\mu V^\nu)(\nabla_\nu V^\mu)] \nonumber\\
&& \qquad \qquad \qquad -\frac{2\xi^2}{n-2} R f^2(x) \bigg\}  \phi^2 \,. 
\eea
This bound retains many features of the worldline bound Eq.~\eqref{eqn:cline}. In particular the mass-dependent term $(1-2\xi) m^2 f^2(x)/(n-2)$ appears in both. In the worldline case, this term
prevented us from showing that the long-term worldline average of $\rho$ is positive. For worldvolume averaging, however, we can use the field equation \eqref{eqn:field} along with successive integration-by-parts to derive an alternative bound that has no explicit mass-dependence and remains free from any field derivatives. This is achieved as follows. 
The field equation allows us to rewrite the stress-energy tensor as
	\be
	T_{\mu \nu}=(1-2\xi) (\nabla_\mu \phi)(\nabla_\nu \phi)-\frac{1}{2} (1-4\xi)( \phi \Box_g \phi+(\nabla \phi)^2)-2\xi \phi \nabla_\mu \nabla_\mu \phi-\xi R_{\mu \nu} \phi^2 +\frac{1}{2} g_{\mu \nu} (\phi P_\xi \phi)\,,
	\ee 
resulting in an alternative expression for the EED,
	\bea
	\label{eqn:SECbox}
	\rho &=& (1-2\xi) U^\mu U^\nu (\nabla_\mu \phi)(\nabla_\nu \phi) +\frac{1-2\xi}{n-2}(\phi \Box_g \phi)-\frac{2\xi}{n-2} (\nabla \phi)^2  \nonumber\\
	&& -2 \xi U^\mu U^\nu \phi \nabla_\mu \nabla_\nu \phi-\xi U^\mu U^\nu R_{\mu \nu} \phi^2+\frac{1}{n-2}\xi R \phi^2  -\frac{1}{n-2}(\phi P_\xi \phi)  \,.
	\eea
Thus, we may write the averaged EED for ``on shell" field configurations, as
\bea \label{eqn:ASED2}
&& \int dVol \, \rho\, f^2(x) =\int dVol  \bigg\{ (1-2\xi) U^\mu U^\nu (\nabla_\mu \phi)(\nabla_\nu \phi)f^2(x)+\frac{1-2\xi}{n-2} f^2(x) \phi \Box_g \phi \nonumber\\
&&\qquad \qquad -\frac{2\xi}{n-2} (\nabla\phi)^2 f^2(x)-\xi V^\mu V^\nu( 2\phi \nabla_\mu \nabla_\nu \phi+R_{\mu \nu} \phi^2)+\frac{1}{n-2} \xi R\phi^2 f^2(x) \bigg\} \,.
\eea
Writing
\be
\phi \Box_g \phi=\frac{1}{2} \Box_g \phi^2-g^{\mu \nu} (\nabla_\mu \phi)(\nabla_\nu \phi) \,,
\ee
the EED becomes
\bea \label{eqn:ASED3}
\int dVol \, \rho\ f^2(x) &=&\int dVol  \bigg\{ \left( \frac{n-3}{n-2}-2\xi \right) U^\mu U^\nu (\nabla_\mu \phi)(\nabla_\nu \phi)f^2(x) \nonumber\\
&&\qquad \qquad  +\frac{h^{\mu \nu}}{n-2}(\nabla_\mu \phi)(\nabla_\nu \phi)f^2(x)+\frac{1-2\xi}{2(n-2)}   (\Box_g \phi^2)f^2(x) \nonumber \\
&&\qquad \qquad-\xi V^\mu V^\nu (2\phi \nabla_\mu \nabla_\nu \phi+R_{\mu \nu} \phi^2)+\frac{\xi R}{n-2} \phi^2 f^2(x) \bigg\} \,. 
\eea
By integrating by parts we can rewrite the third term of the integral
\be
 \int d Vol \, \frac{1-2\xi}{2(n-2)}   (\Box_g \phi^2)f^2(x) = \frac{1-2\xi}{2(n-2)}  \int d Vol \,  (\Box_g f^2(x)) \phi^2 \,,
\ee
where we used the fact that the boundary terms vanish. Using Eq.~(\ref{eqn:posdifvol}) and discarding the positive terms from the bound for $\xi \in [0,\xi_v]$, where 
\be
\xi_v= \frac{n-3}{2(n-2)} \,,
\ee 
we can write
\bea \label{eqn:volbound2}
\int dVol \, \rho\, f^2(x)  &\geq& - \int dVol \bigg\{ -\frac{1-2\xi}{2(n-2)}(\Box_g f^2(x))+\xi  [(\nabla_\mu V^\mu)^2+(\nabla_\mu V^\nu)(\nabla_\nu V^\mu)]  \nonumber\\
&& \qquad \qquad  \qquad \qquad- \frac{1}{n-2} \xi R f^2(x) \bigg\} \phi^2 \,.
\eea
Note that $\xi_v < 2\xi_c$ for any spacetime dimension $n>2$, while $\xi_c<\xi_v$ for $n\ge 4$. Using Eqs.~(\ref{eqn:volbound1},\ref{eqn:volbound2}) we have proved following theorem:
\begin{theorem}
\label{the:clasvol}
If $M$ is a manifold with metric $g$ and dimension $n \geq 3$, $T_{\mu \nu}$ the stress-energy tensor of a scalar field with coupling constant $\xi \in [0,\xi_v]$ and $f$ a real valued function evaluated on a spacetime point $x$ then ``on shell ''  
\be
\int dVol \, \rho \, f^2(x)  \geq -\min \{ \mathcal{B}_1,\mathcal{B}_2\} \,,
\ee
where
\be
\mathcal{B}_1=\int dVol \bigg\{ \frac{1-2\xi}{n-2}  m^2 f^2(x) -\frac{2\xi^2 R}{n-2} f^2(x) + \xi  \left[(\nabla_\mu V^\mu)^2+(\nabla_\mu V^\nu)(\nabla_\nu V^\mu)\right] \bigg\}  \phi^2 \,,
\ee
and 
\be
\mathcal{B}_2=\int dVol \bigg\{ -\frac{1-2\xi}{2(n-2)}(\Box_g f^2(x)) - \frac{\xi R}{n-2} f^2(x)+\xi  [(\nabla_\mu V^\mu)^2+(\nabla_\mu V^\nu)(\nabla_\nu V^\mu)] \bigg\} \phi^2 \,.
\ee
\end{theorem}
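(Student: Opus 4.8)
The plan is to establish two independent lower bounds on $\int d\mathrm{Vol}\,\rho\,f^2(x)$ — one yielding $-\mathcal{B}_1$ and the other $-\mathcal{B}_2$ — and then to note that, since both hold simultaneously, the sharper statement is their maximum, namely $-\min\{\mathcal{B}_1,\mathcal{B}_2\}$. Each bound follows the philosophy already used in the worldline case: rewrite the on-shell averaged EED so that every field-derivative contribution is collected into manifestly sign-definite pieces, discard the positive pieces, and integrate by parts to move all remaining derivatives off $\phi^2$ and onto the smearing function, using the compact support of $f$ to kill boundary terms.

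For the first bound I would begin from the on-shell expression \eqref{eqn:ASEDvol}. The dangerous nonminimal term $-\xi V^\mu V^\nu(2\phi\nabla_\mu\nabla_\nu\phi + R_{\mu\nu}\phi^2)$ is handled by the integration-by-parts identity \eqref{eqn:posdifvol}, which trades it for the perfect square $2\xi[\nabla_\mu(V^\mu\phi)]^2$ minus a purely $\phi^2$ contribution. The kinetic terms are reorganised via \eqref{eqn:posmetr}, splitting them into a multiple of $U^\mu U^\nu(\nabla_\mu\phi)(\nabla_\nu\phi)$ plus a term $\tfrac{2\xi}{n-2}h^{\mu\nu}(\nabla_\mu\phi)(\nabla_\nu\phi)$ that is positive because $h^{\mu\nu}$ is a positive-definite spatial metric. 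Here the coefficient of the first piece equals $1-2\xi(n-1)/(n-2)=1-\xi/(2\xi_c)$, which is nonnegative precisely for $\xi\le 2\xi_c$; both kinetic pieces and the perfect square are then positive and may be dropped, leaving exactly $-\mathcal{B}_1$ as in \eqref{eqn:volbound1}.

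The second bound exploits the field equation more aggressively to eliminate explicit mass dependence. Using \eqref{eqn:field} to rewrite $T_{\mu\nu}$ gives the alternative EED \eqref{eqn:SECbox}, containing a $\tfrac{1-2\xi}{n-2}\phi\Box_g\phi$ term; substituting $\phi\Box_g\phi=\tfrac12\Box_g\phi^2-(\nabla\phi)^2$ then produces \eqref{eqn:ASED3}, in which the coefficient of $U^\mu U^\nu(\nabla_\mu\phi)(\nabla_\nu\phi)$ has become $(n-3)/(n-2)-2\xi$. Integrating by parts to move $\Box_g$ from $\phi^2$ onto $f^2$, invoking \eqref{eqn:posdifvol} once more, and discarding the positive terms yields $-\mathcal{B}_2$ of \eqref{eqn:volbound2}. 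The main obstacle, and the step that fixes the admissible coupling range, is this sign bookkeeping: the new $U$-kinetic coefficient is nonnegative only for $\xi\le\xi_v=(n-3)/(2(n-2))$, which is why the second bound — and hence the combined theorem — is restricted to $\xi\in[0,\xi_v]$. Since $\xi_v<2\xi_c$ for every $n>2$, this range lies inside that of the first bound, so both lower bounds are valid throughout $[0,\xi_v]$, and taking the larger of $-\mathcal{B}_1$ and $-\mathcal{B}_2$ completes the proof.
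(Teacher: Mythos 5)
Your proposal is correct and follows essentially the same route as the paper's own proof: the $\mathcal{B}_1$ bound from Eqs.~\eqref{eqn:ASEDvol}, \eqref{eqn:posdifvol} and \eqref{eqn:posmetr} with positive terms discarded for $\xi\in[0,2\xi_c]$, and the $\mathcal{B}_2$ bound from the field-equation rewriting \eqref{eqn:SECbox}--\eqref{eqn:ASED3} followed by integration by parts and a second use of \eqref{eqn:posdifvol}, valid for $\xi\in[0,\xi_v]$. Your closing observation that $\xi_v<2\xi_c$ for all $n>2$, so that both bounds hold on $[0,\xi_v]$ and one may take $-\min\{\mathcal{B}_1,\mathcal{B}_2\}$, is exactly the paper's concluding step.
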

Note that this result is restricted to minimal coupling in $n=3$.
For flat spacetimes the bounds of Theorem \ref{the:clasvol} become
\be \label{eqn:B1flat}
\mathcal{B}_1=\int dVol \left\{ \frac{1-2\xi}{n-2}  m^2 f^2(x) +\xi  [(\nabla_\mu V^\mu)^2+(\nabla_\mu V^\nu)(\nabla_\nu V^\mu)] \right\}  \phi^2 \,,
\ee
and
\be \label{eqn:B2flat}
\mathcal{B}_2=\int dVol \bigg\{ -\frac{1-2\xi}{2(n-2)} (\Box f^2(x)) + \xi  [(\nabla_\mu V^\mu)^2+(\nabla_\mu V^\nu)(\nabla_\nu V^\mu)] \bigg\} \phi^2 \,,
\ee
while for minimally coupled fields on any spacetime, 
\be
\mathcal{B}_1=\frac{1}{n-2} m^2  \int dVol \, f^2(x)   \phi^2 \,, \text{ and } \mathcal{B}_2= -\frac{1}{2(n-2)} \int dVol \, (\Box_g f^2(x)) \phi^2 \,.
\ee
We now investigate the behaviour of  Eq.~(\ref{eqn:B2flat}) under rescaling of the smearing function $f$. First let $\fmax$ be the maximum amplitude of the field 
\be
\fmax=\sup_M |\phi | \,,
\ee
so we can take it out of the bound, yielding
\be
\label{eqn:fmax}
\int dVol \, \rho\, f^2(x) \geq - \fmax^2 \int dVol \bigg\{ -\frac{1-2\xi}{2(n-2)}  (\Box_g f^2(x)) + \xi  [(\nabla_\mu V^\mu)^2+(\nabla_\mu V^\nu)(\nabla_\nu V^\mu)] \bigg\} \,.
\ee
(Eq.~\eqref{eqn:fmax} also holds if the supremum in the definition of $\phi_{\max}$ is taken over the support of $f$. However, in order to keep $\phi_{\max}$ constant for all rescaled smearings, we extend its definition to the entire manifold.)
Consider a translationally invariant unit timelike vector field $U^\mu$ and define the rescaled smearing function $f_\lambda$ for $\lambda>0$ to be
\be
f_\lambda(x)=\frac{f(x/\lambda)}{\lambda^{n/2}} \,,
\ee
so that its normalization is independent of the choice of $\lambda$
\be
\int dVol \, f^2_\lambda (x)= \int dVol \, f^2 (x)=1 \,.
\ee

Replacing $f$ by $f_\lambda$, the right-hand side of Eq.~(\ref{eqn:fmax}) becomes
\bea
&&-\frac{1}{4} \int dVol  \bigg( -\frac{1-2\xi}{2(n-2)}  \Box f^2_\lambda(x)+  \xi  [(U^\mu [\nabla_\mu f_\lambda (x)] )^2+(U^\nu [\nabla_\mu f_\lambda(x)])(U^\mu [\nabla_\nu f_\lambda(x)])] \bigg) \fmax^2 \nonumber\\
&&= -\frac{1}{4} \int dVol \frac{1}{\lambda^{n+2}} \bigg( -\frac{1-2\xi}{2(n-2)} \Box f^2(x/\lambda) +  \xi  [(U^\mu [\nabla_\mu f (x/\lambda)] )^2\\
&&\qquad \qquad \qquad \qquad \qquad \qquad  \qquad \qquad+(U^\nu [\nabla_\mu f(x/\lambda)])(U^\mu [\nabla_\nu f(x/\lambda)])] \bigg) \fmax^2 \nonumber \,,
\eea
where we used the fact that $U^\mu$ is translationally invariant and so its derivatives vanish. Changing variables $x \to \lambda x$ gives
\be
-\frac{1}{4} \int dVol \frac{1}{\lambda^2} \bigg( -\frac{1-2\xi}{2(n-2)} \Box f^2(x)+  \xi  [(U^\mu [\nabla_\mu f (x)] )^2+(U^\nu [\nabla_\mu f(x)])(U^\mu [\nabla_\nu f(x)])] \bigg) \fmax^2 \,.
\ee
In the limit of large $\lambda$ the bound goes to zero and we have
\be
\liminf_{\lambda \to \infty} \int dVol \, \rho\, f^2_\lambda(x) \geq 0 \,,
\ee
thus establishing an averaged SEC for flat spacetimes. A similar calculation for the $\mathcal{B}_1$ bound gives a weaker, negative, bound in this case. 

\section{A worldline inequality for the Einstein--Klein--Gordon system}
\label{sec:ekg}

The inequalities proved in Sections~\ref{sec:worldline} and~\ref{sec:worldvolume} 
are valid for solutions to the Klein--Gordon equation on an arbitrary fixed background spacetime.  
In this section we discuss how our worldline bound can be adapted to provide
more specific information about solutions to the full Einstein--Klein--Gordon system. 

In our discussion it will be important that the field magnitude is constrained below
a critical value. To see why, recall from Eq.~\eqref{eqn:tmunu} that the stress-energy tensor of the nonminimally coupled scalar field contains a term proportional to the Einstein tensor. Therefore the Einstein equations $G_{\mu\nu}=-8\pi T_{\mu\nu}$ can be rearranged into the form
\begin{equation}
G_{\mu\nu} =\frac{[\text{terms in $\phi$, $\nabla\phi$ and $\nabla\nabla\phi$}]_{\mu\nu}}{1-8\pi\xi\phi^2} ,
\end{equation}
where the numerator on the right-hand side no longer contains the Einstein tensor. 
For this reason, values of $|\phi|$ larger than the critical value $(8\pi\xi)^{-1/2}$ are considered unphysical since they correspond to a change of sign of the physical Newton's constant. See for example Ref.~\cite{Barcelo:2000zf}. 
 
We now adapt our worldline bounds of \sec~\ref{sec:worldline} to solutions of the Einstein--Klein--Gordon theory. Taking the trace of the Einstein equation $G_{\mu\nu}=-8\pi T_{\mu\nu}$ gives
\begin{equation}\label{eq:trEins}
\left(\frac{n}{2}-1\right)R = 8\pi T \,.
\end{equation}
This can be used, in combination with the Klein--Gordon equation, to rearrange the trace of the stress-energy tensor of the non-minimally coupled scalar field which is given by Eq.~\eqref{eqn:trace}
\begin{align}
(1-8\pi\xi\phi^2)T &= \left(1-\frac{n}{2}\right) (\nabla\phi)^2 + \frac{n}{2}m^2\phi^2 + \xi (n-1)\Box_g\phi^2 \nonumber \\
&=\left(1-\frac{n}{2}+2\xi(n-1)\right) (\nabla\phi)^2 + \frac{n}{2}m^2\phi^2 - 2\xi (n-1)(m^2+\xi R)\phi^2 
\end{align}
and using \eqref{eq:trEins} again, 
\begin{equation}
\left(1-8\pi\xi(1-\xi/\xi_c)\phi^2\right)T = -2(n-1)(\xi_c-\xi)  (\nabla\phi)^2 + \left(1+2(n-1)(\xi_c-\xi) \right)m^2\phi^2  \,.
\end{equation}
Therefore, if $\xi\le \xi_c$, 
\begin{equation}
\left(1-8\pi\xi(1-\xi/\xi_c)\phi^2\right)T \ge -2(n-1)(\xi_c-\xi)  (\nabla_{\dot{\gamma}}\phi)^2  \,,
\end{equation}
where we used the fact that $h^{\mu \nu}=\dot{\gamma}^\mu \dot{\gamma}^\nu-g^{\mu \nu}$ is a positive definite metric. If we take the maximum value of the field less than the critical value so $8\pi\xi\phi^2\le 1$, we have
\begin{equation}
\xi^2 R\phi^2 \ge -  (\xi_c-\xi)  (\nabla_{\dot{\gamma}}\phi)^2 \,,
\end{equation}
where we used Eq.~\eqref{eq:trEins}. 
Now we can replace the term including the Ricci scalar in the bound of Eq.~\eqref{eqn:cline2} using the inequality 
\begin{equation}\label{eqn:Ricci_ineq}
\int \frac{2\xi^2 R\phi^2}{n-2} f^2(\tau) d\tau \ge -\int \frac{2(\xi_c-\xi)}{n-2}  (\nabla_{\dot{\gamma}}\phi)^2  f^2(\tau) d\tau \,.
\end{equation}
This gives the following bound for any solution to the Einstein--Klein--Gordon system
\bea
\label{eqn:cline3}
\int_\gamma d\tau \, \rho\, f^2(\tau)&\geq& - \int_\gamma d\tau \bigg\{ \frac{1-2\xi}{n-2} m^2 f^2(\tau)+\xi \bigg(2( f'(\tau))^2+R_{\mu \nu} \dot{\gamma}^\mu \dot{\gamma}^\nu f^2(\tau) \bigg) \bigg\}\phi^2 \,,
\eea 
valid for $0\le \xi\le\xi_c$. To get the bound of Eq.~\eqref{eqn:cline3} from Eqs.~\eqref{eqn:cline2} and~\eqref{eqn:Ricci_ineq}, we discarded 
\begin{equation}
\int_\gamma \left(1-\frac{1}{2(n-1)}-2\xi\right) (\nabla_{\dot{\gamma}}\phi)^2  f^2(\tau) d\tau \geq \frac{1}{2} \int_\gamma (\nabla_{\dot{\gamma}}\phi)^2  f^2(\tau) d\tau \geq 0 \,,
\end{equation}
for $\xi\le\xi_c$. Now noticing that
\be
R_{\mu \nu} \dot{\gamma}^\mu \dot{\gamma}^\nu=-8\pi \rho \,,
\ee
we can move this term to the left side of the inequality of (\ref{eqn:cline3})
\be
\int_\gamma d\tau \, R_{\mu \nu} \dot{\gamma}^\mu \dot{\gamma}^\nu f^2(\tau) (1-8\pi \xi \phi^2) \leq   \int_\gamma d\tau \bigg\{ \frac{1-2\xi}{n-2} m^2 f^2(\tau)+2 \xi ( f'(\tau))^2 \bigg\}8\pi \phi^2 \,.
\ee
Since  $\phi$ is less than $(8\pi \xi)^{-1/2}$ we can absorb the factor $(1-8\pi \xi \phi^2)$ in $f(\tau)$ and state the following theorem
\begin{theorem}
	\label{the:ekgineq}
Suppose $(M,g,\phi)$ is a solution to the Einstein--Klein--Gordon equation in dimension $n>2$ with coupling constant $\xi\in[0,\xi_c]$ and  $|\phi| \le (8\pi\xi)^{-1/2}$. Let $\gamma$ be a timelike geodesic parametrized by proper time $\tau$ in $(M,g)$ and $f$ a real valued function. Then
\be
\label{eqn:ffactor}
\int_\gamma d\tau \, R_{\mu \nu} \dot{\gamma}^\mu \dot{\gamma}^\nu f^2(\tau)  \leq  \int_\gamma d\tau \bigg\{ \left(\frac{1-2\xi}{n-2}\right) \frac{m^2 f^2(\tau)}{1-8\pi \xi \phi^2}  +2 \xi  \left(\frac{d}{d\tau}\frac{f(\tau)}{\sqrt{1-8\pi \xi \phi^2}}\right)^2 \bigg\}8\pi \phi^2 \,. 
\ee
\end{theorem}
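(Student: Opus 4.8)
The plan is to regard the strengthened worldline inequality \eqref{eqn:cline2}, which holds on an arbitrary background, as the starting point, and to specialise it to genuine solutions of the Einstein--Klein--Gordon system by eliminating the two explicit curvature contributions on its right-hand side---the scalar term $R\phi^2$ and the Ricci term $R_{\mu\nu}\dot{\gamma}^\mu\dot{\gamma}^\nu\phi^2$---in favour of matter data alone. The feature that makes this more than a routine substitution is that the stress-energy tensor \eqref{eqn:tmunu} of the nonminimally coupled field itself contains the Einstein tensor, so the field equations are only implicitly solved for the geometry; closing the estimate forces us to control the factor $1-8\pi\xi\phi^2$, and this is exactly where the sub-critical hypothesis $|\phi|\le(8\pi\xi)^{-1/2}$ enters.

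First I would remove the scalar-curvature term. Taking the trace of the Einstein equations gives \eqref{eq:trEins}, and combining it with the on-shell trace \eqref{eqn:trace} and the field equation \eqref{eqn:field} expresses $\bigl(1-8\pi\xi(1-\xi/\xi_c)\phi^2\bigr)T$ algebraically in terms of $(\nabla\phi)^2$ and $m^2\phi^2$. For $\xi\le\xi_c$ the coefficient of $(\nabla\phi)^2$ is nonpositive, so bounding $(\nabla\phi)^2\le(\nabla_{\dot{\gamma}}\phi)^2$ by positive-definiteness of $h^{\mu\nu}$ and discarding the nonnegative $m^2\phi^2$ contribution yields a lower bound on $T$, hence on $\xi^2 R\phi^2$ through \eqref{eq:trEins}; imposing $8\pi\xi\phi^2\le1$ turns this into the pointwise inequality \eqref{eqn:Ricci_ineq}. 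Substituting \eqref{eqn:Ricci_ineq} into \eqref{eqn:cline2} then trades the $R\phi^2$ term for a $(\nabla_{\dot{\gamma}}\phi)^2$ contribution, which combines with the $\bigl(1-\xi/2\xi_c\bigr)(\nabla_{\dot{\gamma}}\phi)^2$ already retained in the strengthened bound into a single term whose coefficient is bounded below by $\tfrac12$ for $\xi\le\xi_c$; being manifestly nonnegative it may be dropped, leaving \eqref{eqn:cline3}.

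Next I would remove the surviving Ricci term from the right-hand side by noting that, on shell, the Einstein equations give $R_{\mu\nu}\dot{\gamma}^\mu\dot{\gamma}^\nu=-8\pi\rho$, so that the left member of \eqref{eqn:cline3}, namely $\int_\gamma\rho\,f^2\,d\tau$, and the unwanted term on the right are both proportional to $\int_\gamma R_{\mu\nu}\dot{\gamma}^\mu\dot{\gamma}^\nu f^2\,d\tau$. Collecting these two occurrences on the left produces that integral multiplied by the weight $1-8\pi\xi\phi^2$, with the remaining mass- and $f'$-terms on the right.

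The last step is the delicate one. Because $|\phi|\le(8\pi\xi)^{-1/2}$ guarantees that $1-8\pi\xi\phi^2$ is strictly positive and, for a smooth solution, as regular as $\phi$ itself, the function $f/\sqrt{1-8\pi\xi\phi^2}$ is again an admissible compactly supported smearing. Since the inequality just obtained holds for every such smearing, I would apply it with $f$ replaced by $f/\sqrt{1-8\pi\xi\phi^2}$; the weight $1-8\pi\xi\phi^2$ on the left then cancels to leave $\int_\gamma R_{\mu\nu}\dot{\gamma}^\mu\dot{\gamma}^\nu f^2\,d\tau$, while the right-hand side becomes exactly \eqref{eqn:ffactor}, the proper-time derivative now acting on $f/\sqrt{1-8\pi\xi\phi^2}$. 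The main obstacle, and the point that repays careful checking, is the legitimacy of this final rescaling: it is valid only because the sub-critical bound keeps the weight bounded away from zero, so that the rescaled smearing is genuinely $C^2$ with compact support. Were $\phi$ permitted to attain the critical value the weight could vanish, the substitution would fail, and---consistent with the earlier discussion of the sign of the physical Newton constant---the whole closure of the estimate would break down.
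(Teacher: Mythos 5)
Your proposal is correct and takes essentially the same route as the paper's own proof: the trace of the Einstein equations combined with the field equation gives the Ricci-scalar estimate \eqref{eqn:Ricci_ineq}, which inserted into \eqref{eqn:cline2} (discarding the leftover nonnegative $(\nabla_{\dot{\gamma}}\phi)^2$ term, whose coefficient is indeed at least $\tfrac{1}{2}$ for $\xi\le\xi_c$) yields \eqref{eqn:cline3}; then $R_{\mu\nu}\dot{\gamma}^\mu\dot{\gamma}^\nu=-8\pi\rho$ collects the Ricci terms on the left with weight $1-8\pi\xi\phi^2$, and the substitution $f\mapsto f/\sqrt{1-8\pi\xi\phi^2}$ produces \eqref{eqn:ffactor}. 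Your closing caveat that this substitution needs the weight bounded away from zero is exactly the role the sub-critical field bound plays in the paper's argument.
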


This inequality has the advantage that the left-hand side is geometric, while only non-geometric terms appear on the right-hand side. It will enable us to prove
a singularity theorem for this system.

\section{A Hawking-type singularity theorem}
\label{sec:singularity}

In this section we establish a Hawking-type singularity theorem with a weakened energy condition.  A similar Penrose-type singularity theorem was discussed in \cite{Fewster:2010gm}.\footnote{Note that \cite{Fewster:2010gm} employs $+++$ sign conventions.}
We then use the result of \sec\ref{sec:ekg} to obtain a Hawking-type singularity theorem for 
the non-minimally coupled Einstein--Klein--Gordon theory.  

\begin{theorem}\label{thm:sing}
	Let $(M,g)$ be a globally hyperbolic spacetime of dimension $n>2$, and let $S$ be a
	smooth compact spacelike Cauchy surface for $(M,g)$. Suppose that
	\begin{enumerate}\renewcommand{\theenumi}{\alph{enumi}}
		\item\label{it:lifetime} there exists $\tau_0>0$ such that the congruence of 
		future-directed unit-speed geodesics issuing orthogonally from $S$ can be continued to the past of $S$ for a proper time of at least $\tau_0$ with a smooth velocity field $U^\mu$ and expansion $\theta=\nabla_\mu U^\mu$;
		\item\label{it:Rbdsing} there are positive constants $Q$ and $\tilde{Q}$ such that, along each future complete unit speed timelike geodesic $\gamma:[-\tau_0,\infty) \to M$ issuing orthogonally from $S$ 
		one has an inequality
		\begin{equation}\label{eq:Rbdsing}
			\int R_{\mu \nu}\dot{\gamma}^\mu \dot{\gamma}^\nu f(\tau)^2\,d\tau \le Q(\|f'\|^2+ \tilde{Q}^2\|f\|^2),
		\end{equation}
		where $|| \cdot ||$ is the $L^2$-norm, for all smooth, real-valued $f$ compactly supported in $(-\tau_0,\infty)$;
		\item\label{it:singcon} for some $K>0$, (i) the inequality
		\begin{equation}\label{eq:singcon}
			\nabla_U \theta|_{\gamma(\tau)}+\frac{\theta(\gamma(\tau))^2}{n-1}\ge  Q(\tilde{Q}^2-K^2) \qquad \textrm{on $(-\tau_0,0]$}
		\end{equation}
		holds along every future-directed unit-speed geodesic $\gamma(\tau)$  
		issuing orthogonally from $S$ at $\tau=0$, and \\
		(ii) the expansion $\theta$ on $S$ obeys 
		\begin{equation}\label{eq:theta0bd}
			\theta|_S <-\tilde{Q}\sqrt{Q(n-1)+Q^2/2}-\frac{1}{2}Q K \coth{(K\tau_0)}\,.
		\end{equation} 
		Alternatively, 
		it is sufficient if (c)(i) holds with  
		\be
		\label{eqn:singconb}
		\nabla_U \theta|_{\gamma(\tau)} \geq 0 \qquad \textrm{on $(-\tau_0,0]$,}
		\ee 
	 	in place of \eqref{eq:singcon}, and (c)(ii) holds either as before (for some $K>0$) or with
		\eqref{eq:theta0bd} replaced by
		\be
		\label{eqn:theta0b}
		\theta|_S < -\tilde{Q} \sqrt{Q(n-1)+Q^2/2}-\frac{Q}{2 \tau_0} \,.
		\ee 
	\end{enumerate}
	Then the spacetime is future geodesically incomplete. 
\end{theorem}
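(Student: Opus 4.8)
The plan is to deduce the theorem from the standard Hawking focusing argument, with the pointwise strong energy condition replaced by the weighted average in hypothesis~(\ref{eq:Rbdsing}). First I would isolate the purely causal-theoretic core of Hawking's theorem: in a globally hyperbolic spacetime with compact spacelike Cauchy surface $S$, if there is a uniform $T>0$ such that every future-directed unit-speed geodesic issuing orthogonally from $S$ that extends to proper time $T$ develops a focal point to $S$ in $(0,T]$, then $(M,g)$ is future timelike geodesically incomplete. This part is standard and proceeds by contradiction: assuming future completeness, every orthogonal geodesic extends to $[0,\infty)$ and hence focuses by $T$; but a geodesic ceases to be locally length-maximising beyond a focal point, so the Lorentzian distance from $S$ is bounded by $T$, contradicting the fact that along any future-complete geodesic the distance to $S$ grows without bound. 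It therefore suffices to establish uniform focusing, and all of the specific hypotheses are used there.

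To set up the focusing step I would fix an orthogonal geodesic $\gamma$ and suppose, for contradiction, that it is future complete and free of focal points. Using~(a) I extend the orthogonal congruence back to $\gamma(-\tau_0)$, so that the expansion $\theta$ is smooth and finite on $(-\tau_0,\infty)$; writing $\theta=(n-1)\psi'/\psi$ with $\psi>0$, the Raychaudhuri equation~(\ref{eqn:ray0}) with the shear discarded becomes the linear differential inequality $\psi''\le \tfrac{1}{n-1}R_{\mu\nu}\dot\gamma^\mu\dot\gamma^\nu\,\psi$ on $[0,\infty)$; I abbreviate $r=R_{\mu\nu}\dot\gamma^\mu\dot\gamma^\nu$. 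Since $\psi''/\psi = \tfrac{1}{n-1}(\nabla_U\theta + \theta^2/(n-1))$, hypothesis~(c)(i) furnishes precisely a lower bound for $\psi''/\psi$ on the past interval $(-\tau_0,0]$, while the no-focal-point assumption keeps $\psi>0$ throughout.

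The analytic heart of the proof is to convert the averaged bound into a constraint on the initial expansion $\psi'(0)/\psi(0)=\theta|_S/(n-1)$. Multiplying $\psi''\le\tfrac{r}{n-1}\psi$ by $f^2/\psi$ for a test function $f$ and integrating by parts yields the index-form identity $\int f^2\psi''/\psi\,d\tau = \int (f'-f\psi'/\psi)^2\,d\tau - \int(f')^2\,d\tau$ together with a boundary contribution at $\tau=0$, so that discarding the manifestly nonnegative square turns~(\ref{eq:Rbdsing}) into an inequality bounding $\theta|_S$ from below. Concretely, I would take $f$ vanishing at $-\tau_0$, using the explicit profile $f(\tau)=\sinh(K(\tau+\tau_0))/\sinh(K\tau_0)$ on $[-\tau_0,0]$ — which contributes the seam value $f'(0)/f(0)=K\coth(K\tau_0)$ and meshes with the lower bound from~(c)(i) — glued to an exponentially decaying profile on $[0,\infty)$ whose decay rate is left free. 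Feeding these into~(\ref{eq:Rbdsing}) and optimising over the decay rate (an Euler--Lagrange/Cauchy--Schwarz step that produces the square root $\sqrt{Q(n-1)+Q^2/2}$) gives a lower bound of the form $\theta|_S \ge -\tilde Q\sqrt{Q(n-1)+Q^2/2}-\tfrac12 QK\coth(K\tau_0)$, in direct contradiction with~(\ref{eq:theta0bd}). The alternative hypotheses are treated in parallel: (\ref{eqn:singconb}) replaces the $\sinh$-model on the past by a convexity estimate for $\theta$, and the threshold~(\ref{eqn:theta0b}) arises as the $K\to0$ limit, in which $K\coth(K\tau_0)\to 1/\tau_0$.

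The principal obstacle is exactly the passage from the pointwise strong energy condition to the weighted average~(\ref{eq:Rbdsing}): one can no longer conclude that $\psi$ is concave, and focusing must instead be extracted from an integral inequality, which forces both the index-form manipulation and a delicate, essentially variational, choice of test function. Matching the stated constants is the fiddly part — a naive test function yields a valid but weaker threshold, so the optimisation, together with careful bookkeeping of the seam at $\tau=0$ and of the contributions of~(c)(i) on the past interval, must be carried out to obtain the sharp factors $Q^2/2$ and $\tfrac12\coth(K\tau_0)$ appearing in~(\ref{eq:theta0bd})--(\ref{eqn:theta0b}).
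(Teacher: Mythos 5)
Your overall strategy --- assume completeness, use the averaged bound \eqref{eq:Rbdsing} in place of the pointwise SEC to force the expansion of the congruence orthogonal to $S$ to blow up, and turn that into a contradiction via maximizing geodesics --- is the same as the paper's. Your causal-theoretic reduction (every maximizing geodesic from $S$ is orthogonal and focal-point-free, so a uniform focusing time bounds the Lorentzian distance to $S$, contradicting completeness) is a legitimate variant of the paper's, which instead extracts a single complete $S$-ray and derives the contradiction along that one geodesic; your version needs the focusing time to be uniform, which the uniform constants $Q,\tilde Q,K$ do supply, so this part is fine.

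The genuine gap is in the analytic core. The paper does not prove the ODE blow-up statement itself: it invokes Theorem~4.1 and Lemma~6.1 of \cite{Fewster:2010gm}, and the constants in \eqref{eq:theta0bd} are exactly what those results deliver. Your self-contained substitute --- multiply $\psi''\le\tfrac{1}{n-1}R_{\mu\nu}\dot\gamma^\mu\dot\gamma^\nu\,\psi$ by $f^2/\psi$, integrate by parts, discard the square $(f'-f\psi'/\psi)^2$, lower-bound the past contribution by (c)(i), and insert the $\sinh$/exponential profile --- leads to the inequality
\[
-\theta|_S\, f(0)^2 \;\le\; Q\|f'\|_{(-\tau_0,0)}^2+QK^2\|f\|_{(-\tau_0,0)}^2+\bigl(Q+n-1\bigr)\|f'\|_{(0,\infty)}^2+Q\tilde{Q}^2\|f\|_{(0,\infty)}^2 ,
\]
and hence to the sufficient condition $\theta|_S<-\tilde{Q}\sqrt{Q(n-1)+Q^2}-QK\coth(K\tau_0)$, \emph{not} to \eqref{eq:theta0bd}. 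Crucially, no further choice of test function can improve this within your framework: by the elementary estimates $a\|f'\|^2+b\|f\|^2\ge\sqrt{ab}\,f(0)^2$ on $[0,\infty)$ and $\min\{\|f'\|^2+K^2\|f\|^2\}=K\coth(K\tau_0)$ on $[-\tau_0,0]$, your $\sinh$ and exponential profiles are already the exact minimizers of the two variational problems on the right-hand side. So the final step of your plan --- that ``careful optimisation'' recovers the factors $Q^2/2$ and $\tfrac12\coth(K\tau_0)$ --- is precisely where the proof fails: those factors are unobtainable from the discard-the-square inequality, and require the structurally sharper Riccati analysis encapsulated in Theorem~4.1 of \cite{Fewster:2010gm}, which exploits the dynamics of the solution rather than a single integration-by-parts estimate. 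As written, your argument proves a correct but strictly weaker theorem (one demanding more initial contraction). A second, more technical, omission: your integration by parts on $[0,\infty)$ produces a boundary term $\lim_{\tau\to\infty}f(\tau)^2\theta(\tau)/(n-1)$ which you never control; the contradiction hypothesis only gives that $\theta$ is finite on $[0,\infty)$, not bounded below, so one must separately rule out unboundedly negative (but finite-time-regular) behaviour at infinity --- e.g.\ by showing that once $\theta$ drops below a threshold it blows up in finite time --- and this, too, is part of what the cited Theorem~4.1 takes care of.
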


{\noindent\em Remarks:} 
1. Note that hypotheses~\eqref{it:lifetime} and~\eqref{it:singcon} refer to the recent past of the Cauchy surface $S$, and therefore would in principle be amenable to observational confirmation. 
2. The proof shows that the expansion of the geodesic congruence normal to $S$ must actually diverge to $-\infty$ at finite time. From this perspective it may seem strange that~\eqref{eq:singcon} can be satisfied if $\dot{\theta}$ is large and positive on $(-\tau_0,0]$. However this is just an expression of the averaged bound~\eqref{eq:Rbdsing}: large positive values of $R_{\mu\nu}\dot{\gamma}^\mu\dot{\gamma}^\nu$ in the recent past must be counterbalanced by (even larger) negative values in the future (this follows by exactly the same reasoning used in explorations of `quantum interest'~\cite{Ford:1999qv, Fewster:1999kr}) which drive the expansion to $-\infty$. 3. The constants $Q$ and $\tilde{Q}$ can be global or be allowed to vary between geodesic congruences if that leads to a tighter bound. 
4. Clearly $\tau_0$ may be replaced in hypothesis (c) by any $\tilde{\tau}_0\in (0,\tau_0]$, giving useful additional freedom. Reducing $\tilde{\tau}_0$ means that \eqref{eq:singcon} can perhaps be satisfied with a smaller value of $K$, although this needs to be weighed against any consequent increase in $Q K \coth{(K\tilde{\tau}_0)}$. In any case there is an optimum value of $\tilde{\tau}_0$ for any fixed function $\theta$. 

\begin{proof}
	
	The beginning of the proof is exactly the same as the singularity theorem 5.1 in Ref.~\cite{Fewster:2010gm}. We suppose that the spacetime is timelike geodesically complete, and aim for a contradiction. General properties of globally hyperbolic spacetimes with compact Cauchy surfaces guarantee the existence of a future-directed
	\emph{$S$-ray} $\gamma$ --- that is, $\gamma$ is a unit-speed geodesic, issuing orthogonally from $S$, so that the Lorentzian distance from each $\gamma(\tau)$ to $S$ is precisely $\tau$, for all $\tau\in [0,\infty)$ --- $\gamma$ is complete by assumption. There is a neighbourhood of $\gamma$ in $J^+(S)$ in which the Lorentzian distance $\rho_S(p)$ from $p$ to $S$ is smooth. (We choose conventions so that $\rho_S$ is positive for timelike separation.) In this neighbourhood, the velocity field $U^\mu=\nabla^\mu\rho_S(p)$ is
	a smooth future-directed unit timelike vector field which is irrotational and orthogonal to $S$. 
	We now restrict to the geodesic $\gamma$ and write $\theta(\tau):=\nabla_\mu U^\mu|_{\gamma(\tau)}$ for the expansion along $\gamma$. By the above properties, $\theta(\tau)$ 
 	is a smooth solution to  Raychaudhuri's equation 
	\be
	\label{eqn:ray}
	\dot{\theta}(\tau)=r(\tau) -\frac{1}{n-1} \theta(\tau)^2 \,.
	\ee
 where $r(\tau)=R_{\mu \nu} \dot{\gamma}^\mu \dot{\gamma}^\nu-2\sigma(\gamma(\tau))^2$ and $\sigma$ is the shear scalar.  By our assumption (\ref{it:lifetime}) together with the assumption of future geodesic completeness, this equation is valid on $(-\tau_0,\infty)$.  Additionally note that if condition (b) holds then, 
 \be
 \label{eqn:rcondb}
  	\int r(\tau) f(\tau)^2\,d\tau \le Q(\|f'\|^2+ \tilde{Q}^2\|f\|^2) =: |||f|||^2\,,
  	\ee
  	is also true for all $f$. We proceed to prove that, contrary to what has just been shown,
  	Eq.~\eqref{eqn:ray} can have no smooth solution on $(-\tau_0,\infty)$ under these conditions; indeed, the solution must tend to $-\infty$ at finite positive time.
	
	Suppose first that $r(t)\equiv r$ is constant and note that \eqref{eq:theta0bd} or \eqref{eqn:theta0b} imply that $\theta_0=\theta(0)<0$. If $r<0$ then the unique
	solution to~\eqref{eqn:ray} is 
	\begin{equation}
	\theta(\tau)=-\sqrt{-(n-1)r} \cot{\left[\sqrt{\frac{-r}{n-1}}(\tau_*-\tau)\right]} \,, \quad \tau_*=\sqrt{\frac{n-1}{-r}} \cot^{-1}{\left[\frac{-\theta_0}{\sqrt{-(n-1)r}}\right]} \,,
	\end{equation}
	using the branch of arc-cotangent valued in $(0,\pi)$. As $\theta_0<0$, we have $\tau_*>0$ and the solution blows up as $\tau\to\tau_*$.
	Similarly, if $r=0$, the solution is
	\begin{equation}
		\theta(\tau)=\frac{n-1}{\tau-\tau^*}, \qquad \tau_*=-\frac{n-1}{\theta_0}
	\end{equation}
	and again there is blow-up as $\tau\to\tau_*>0$. If $r>0$, then  \eqref{eqn:rcondb} implies that
	$r\leq Q\tilde{Q}^2$. Using Eq.~\eqref{eq:theta0bd}, we have $\theta_0<-\tilde{Q}\sqrt{Q(n-1)}\leq -\sqrt{-(n-1)r}$
	and the solution is
	\begin{equation}
		\theta(\tau)=-\sqrt{(n-1)r} \coth{\left[\sqrt{\frac{r}{n-1}}(\tau_*-\tau)\right]} \,, \qquad \tau_*=\sqrt{\frac{n-1}{r}} \coth^{-1}{\left[\frac{-\theta_0}{\sqrt{(n-1)r}}\right]} \,,
	\end{equation}
	again blowing up at finite positive time. Therefore, Raychaudhuri's equation has no solution on $(-\tau_0,\infty)$ if $r(\tau)$ is constant, contradicting the existence of the solution shown above. Therefore $r(\tau)$ must be nonconstant.
	
	We may choose $\epsilon>0$ so that $\theta_0+\epsilon$ is also less than the right-hand side of
	\eqref{eq:theta0bd}. By Lemma~6.1 of~\cite{Fewster:2010gm}, there exists $c>0$ and a smooth function
	$h$ supported on $[-\tau_0,\infty)$ with $h(\tau)=e^{-c\tau/(n-1)}$ on $[0,\infty)$ and 
	so that the second inequality in
	\begin{equation}
	\label{eqn:thefour}
	-\theta_0 > 
	\epsilon + \tilde{Q}\sqrt{Q(n-1) + Q^2/2} + \frac{1}{2}QK\coth K\tau_0
	\geq \frac{c}{2} + |||h|||^2 - \int_{-\tau_0}^0 h^2(\tau) r(\tau) \,d\tau   \,.
	\end{equation}
	holds (the first one holds by virtue of our choice of $\epsilon$). 
	Eq.~\eqref{eqn:thefour} implies that \eqref{eqn:ray} has no solution on $[0,\infty)$ by Theorem~4.1 of~\cite{Fewster:2010gm} (applied with $z=-\theta$, $r_0\equiv 0$, $s=n-1$ and with the opposite sign convention for $r$). This contradicts the existence of the solution shown above and therefore demonstrates that the spacetime is future timelike geodesically incomplete.
	
	It remains to show that the alternative conditions stated in hypothesis (c) also suffice. 
	For \eqref{eqn:singconb}, observe that, together with \eqref{eq:theta0bd},
	it implies that $\theta \leq \theta_0<0$ on $(-\tau_0,0]$ and that
	\begin{equation}
		\dot{\theta} + \frac{\theta^2}{n-1} \geq \frac{\theta_0^2}{n-1} \geq  \frac{\tilde{Q}^2(Q(n-1) + Q^2/2)}{n-1} \geq Q\tilde{Q}^2 \qquad \textrm{on $(-\tau_0,0]$} 
	\end{equation}
	as a result of \eqref{eq:theta0bd}. We obtain \eqref{eq:singcon} in consequence and
	therefore conditions (c)(i,ii) both hold. 
	
	Now if \eqref{eqn:theta0b} holds then Eq.~\eqref{eq:theta0bd} holds
	for some $K>0$ because the former is just the $K\to 0+$ limit of the latter. Therefore 
	we have conditions~\eqref{eqn:singconb} and~\eqref{eq:theta0bd} and the previous paragraph
	shows that (c)(i,ii) both hold.
	
	\end{proof}

We now apply this theorem to the more specific setting of Einstein--Klein--Gordon theory, using the worldline bound of Theorem \ref{the:ekgineq}
\begin{corollary}\label{cor:sing}
	Let $(M,g,\phi)$ be a solution to the Einstein--Klein--Gordon equation in dimension $n>2$ and with coupling $\xi\in[0,\xi_c]$. Suppose that $(M,g)$ is globally hyperbolic, let $S$ be a smooth
	spacelike Cauchy surface for $(M,g)$ and suppose that $\phi$ obeys global bounds
	$|\phi|\le\phi_{\text{max}}< (8\pi\xi)^{-1/2}$ and $|\nabla_{\dot{\gamma}}\phi|\le\phi'_{\textrm{max}}$ along all unit speed timelike geodesics $\gamma$ issuing orthogonally from $S$. Assume hypotheses~(\ref{it:lifetime}) and~(\ref{it:singcon}) of Theorem~\ref{thm:sing}, where $Q$ and $\tilde{Q}$ are given by
	\be\label{eq:QQtdef}
	Q=\frac{32\pi \xi \fmax^2}{1-8\pi \xi \fmax^2} \,, \qquad \tilde{Q}^2=\frac{(1-2\xi)m^2}{4\xi(n-2)}+\left(\frac{8\pi \xi \fmax \fmax'}{1-8\pi \xi \fmax^2}\right)^2  \,.
	\ee 
	Then $(M,g)$ is future geodesically incomplete.
\end{corollary}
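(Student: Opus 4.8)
The plan is to reduce the corollary to Theorem~\ref{thm:sing} by checking its averaged energy condition, hypothesis~(\ref{it:Rbdsing}); the lifetime hypothesis~(\ref{it:lifetime}) and the contraction hypothesis~(\ref{it:singcon}) are assumed outright. Hypothesis~(\ref{it:Rbdsing}) asks for an upper bound on $\int R_{\mu\nu}\dot{\gamma}^\mu\dot{\gamma}^\nu f^2\,d\tau$ of the shape $Q(\|f'\|^2+\tilde{Q}^2\|f\|^2)$, and Theorem~\ref{the:ekgineq} already delivers an upper bound on exactly this geometric quantity for any Einstein--Klein--Gordon solution with $|\phi|\le\fmax<(8\pi\xi)^{-1/2}$. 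So the whole job is to show that the right-hand side of~\eqref{eqn:ffactor} is dominated by $Q(\|f'\|^2+\tilde{Q}^2\|f\|^2)$ with $Q,\tilde{Q}$ as fixed in~\eqref{eq:QQtdef}.

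First I would abbreviate $W(\tau)=1-8\pi\xi\,\phi(\gamma(\tau))^2$ and note that the hypothesis $|\phi|\le\fmax<(8\pi\xi)^{-1/2}$ forces $0<1-8\pi\xi\fmax^2\le W\le 1$ along $\gamma$, so $1/W$ and $1/W^3$ are bounded and (since $\xi\le\xi_c<1/2$ gives $1-2\xi>0$) every term on the right of~\eqref{eqn:ffactor} is manifestly nonnegative. I would then expand the derivative appearing there, using $W'=-16\pi\xi\,\phi\,\nabla_{\dot{\gamma}}\phi$, as
\begin{equation}
\frac{d}{d\tau}\frac{f}{\sqrt{W}}=\frac{f'}{\sqrt{W}}+\frac{8\pi\xi\,\phi\,\nabla_{\dot{\gamma}}\phi\,f}{W^{3/2}},
\end{equation}
and apply the elementary inequality $(a+b)^2\le 2a^2+2b^2$ to split this square into a pure-$f'$ piece and a pure-$f$ piece, discarding the sign-indefinite cross term. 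This splitting is the crux of the argument: the factor of $2$ it introduces is precisely what converts the prefactor $16\pi\xi$ of the derivative term in~\eqref{eqn:ffactor} into the coefficient $32\pi\xi$ that appears in $Q$.

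Finally I would bound each of the three resulting integrands pointwise, using $|\phi|\le\fmax$, $|\nabla_{\dot{\gamma}}\phi|\le\phi'_{\max}$ and $W\ge 1-8\pi\xi\fmax^2$, and pull the constants out of the integrals. The $f'$ piece yields the coefficient $\tfrac{32\pi\xi\fmax^2}{1-8\pi\xi\fmax^2}=Q$ multiplying $\|f'\|^2$; the mass term yields $Q\cdot\tfrac{(1-2\xi)m^2}{4\xi(n-2)}$ and the $f$ piece yields $Q\cdot\bigl(\tfrac{8\pi\xi\fmax\phi'_{\max}}{1-8\pi\xi\fmax^2}\bigr)^2$, both multiplying $\|f\|^2$, and these are exactly the two summands of $\tilde{Q}^2$ in~\eqref{eq:QQtdef}. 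Hence the right-hand side of~\eqref{eqn:ffactor} is at most $Q(\|f'\|^2+\tilde{Q}^2\|f\|^2)$, which is hypothesis~(\ref{it:Rbdsing}); with all three hypotheses of Theorem~\ref{thm:sing} satisfied, its conclusion gives future geodesic incompleteness. I expect the only real work to be the bookkeeping that matches the prefactors to~\eqref{eq:QQtdef}, together with a brief remark that the stated $Q,\tilde{Q}$ are understood in the $\xi\to 0^+$ limiting sense at minimal coupling, where $Q\tilde{Q}^2$ stays finite even though $\tilde{Q}^2$ formally diverges. The one genuine modelling choice is the $(a+b)^2\le 2a^2+2b^2$ split, which pins down the constants but need not be optimal.
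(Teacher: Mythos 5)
Your proposal is correct and follows essentially the same route as the paper: both verify hypothesis~(\ref{it:Rbdsing}) from Theorem~\ref{the:ekgineq}, and your expansion of $\frac{d}{d\tau}\bigl(f/\sqrt{W}\bigr)$ followed by $(a+b)^2\le 2a^2+2b^2$ is exactly the paper's inequality $\|(fg)'\|^2\le 2(\|f'\|^2\|g\|_\infty^2+\|f\|^2\|g'\|_\infty^2)$ applied with $g=(1-8\pi\xi\phi^2)^{-1/2}$, with the same pointwise bounds producing the same constants $Q$ and $\tilde{Q}$. Your closing remark on the $\xi\to 0^+$ limiting interpretation (where $Q\tilde{Q}^2$ stays finite) is a small edge case the paper does not spell out, but it does not change the argument.
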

\begin{proof}
	First, we estimate the right-hand side of Eq.~\eqref{eqn:ffactor}, noting first that $\phi^2/(1-8\pi\xi\phi^2)$ is increasing with $\phi$, and therefore
	\begin{equation}
	\int_\gamma d\tau \left(\frac{1-2\xi}{n-2}\right) \frac{8\pi m^2 \phi^2 f^2(\tau)}{1-8\pi \xi \phi^2} \le \left(\frac{1-2\xi}{n-2}\right) \frac{8\pi m^2 \fmax^2}{1-8\pi \xi \fmax^2}\|f\|^2 =\frac{Qm^2}{4\xi} \left(\frac{1-2\xi}{n-2}\right)\|f\|^2
	\end{equation}  
	for all smooth compactly supported $f$. Next,
	using the inequality
	\be
	||(fg)'||^2 \leq 2( ||f'||^2 ||g||^2_\infty +||f||^2 ||g'||^2_\infty) \,,
	\ee
	and also the global bound on $\phi$, we can write  
	\be
	\int_\gamma d\tau \left(\frac{d}{d\tau}\frac{f(\tau)}{\sqrt{1-8\pi \xi \phi^2}}\right)^2 16\pi\xi\phi^2 \leq \frac{32\pi\xi\fmax^2}{1-8\pi \xi \fmax^2} \left( ||f'||^2+||f||^2 \left(\frac{8\pi \xi \fmax \fmax'}{1-8\pi \fmax^2}\right)^2 \right) 
	\ee 
	for all smooth compactly supported $f$.
	Thus Eq.~(\ref{eqn:ffactor}) implies
	that Eq.~\eqref{eq:Rbdsing} holds with $Q$ and $\tilde{Q}$ given by Eq.~\eqref{eq:QQtdef}. This supplies hypothesis~\eqref{it:Rbdsing} of Theorem~\ref{thm:sing} and as we also assume hypotheses~\eqref{it:lifetime} and~\eqref{it:singcon} the result follows.
\end{proof}
	
As in Theorem~\ref{thm:sing}, one could replace the hypotheses~\eqref{eq:singcon} and~\eqref{eq:theta0bd} by the alternatives Eqs.~\eqref{eqn:singconb} and~\eqref{eqn:theta0b}.

\section{Discussion}
\label{sec:disc}

This paper has accomplished two main goals. First, we have established worldline and worldvolume bounds on the effective energy density of the nonminimally coupled scalar field. Elsewhere, these bounds will be used as the basis for a quantum energy inequality variant of the strong energy condition for the quantized field. Second, we have shown that our bounds can be used to derive
a Hawking-type singularity theorem for the Einstein--Klein--Gordon theory, by applying methods
developed in~\cite{Fewster:2010gm}. The overall message here is that sufficient initial contraction on a compact Cauchy surface is enough to guarantee timelike geodesic incompleteness, even though the non-minimally coupled Klein--Gordon theory does not obey the SEC. 

To conclude, we discuss the hypotheses of the singularity theorem in more depth, aiming for an understanding the physical magnitude of the initial contraction required. Reinserting units and thus 
restoring $G$ and $c$, the constants $Q$ and $\tilde{Q}$ become
\be\label{eq:QQtdefdim}
Q=\frac{32\pi \xi G\fmax^2/c^4}{1-8\pi \xi G\fmax^2/c^4} \,, \qquad \tilde{Q}^2=\frac{(1-2\xi)(mc)^2}{4\xi(n-2)}+\left(\frac{8\pi \xi G\fmax \fmax'/c^4}{1-8\pi \xi G\fmax^2/c^4}\right)^2  \,,
\ee 
where $m$ has units of inverse length. Thus $Q$ is dimensionless, while $\tilde{Q}$ has dimensions of inverse time, as required for dimensional correctness in Eq.~\eqref{eq:theta0bd} with $\tau_0$ having dimensions of time and consequently
$K$ being an inverse time. Evidently both $Q$ and $\tilde{Q}$ become very large if
$\fmax^2$ is allowed to be close to the critical value $c^4/(8\pi G\xi)$. However,
$Q\le 4$ if $\fmax^2$ does not exceed half the critical value, for example, so it is no great restriction to take $Q$ of order $1$. Turning to $\tilde{Q}$, the second term is $Q^2(\fmax'/\fmax)^2$ up to numerical factors. The ratio $\fmax'/(mc\fmax)$
is dimensionless and it would be reasonable to assume bounds so that this quantity does not 
greatly exceed unity. Therefore $\tilde{Q}$ would be reasonably expected not greatly to exceed $mc$. 
The remaining ingredient in the contraction bound Eq.~\eqref{eq:theta0bd} are the timescale $\tau_0$ and constant $K$, which depend 
on the history of the solution prior to $S$; we may assume for the purposes of
discussion that the corresponding terms in Eq.~\eqref{eq:theta0bd}  are not large. 
Accordingly, the initial contraction required to ensure geodesic incompleteness might be expected to be of the order of $mc$, i.e., the characteristic frequency
of the (minimally coupled) Klein--Gordon operator (recall that $m$ is an inverse length in this discussion).

For the purely classical Einstein--Klein--Gordon theory, this seems very reasonable. But---with a view to semiclassical quantum gravity---what if the scalar field is supposed to describe an elementary particle, with a correspondingly small mass? In this situation $m$ is replaced in Eq.~\eqref{eq:QQtdefdim} by $mc/\hbar$, the inverse Compton length for a particle of mass $m$, so $\tilde{Q}$ is of the order of the inverse Compton time and our previous
reasoning would suggest that a huge initial contraction would be required to guarantee geodesic incompleteness. For example, with the physical values of $G$, $\hbar$ and $c$ in $n=4$ dimensions, and with $m$ on the order of the pion mass, $m=140\textrm{MeV}/c^2$, the initial contraction would be of the order $2\times 10^{23}{\textrm{s}}^{-1}$ (using the value $\hbar=6.6\times 10^{-22}\textrm{MeV.s}$) by such arguments. This would call into question the utility of the singularity result for this situation.

However, the value of $\fmax$ should be reconsidered in this hybrid model.
To indicate the values that would be reasonable,
we consider a quantized scalar field in Minkowski spacetime of dimension $n$, in a thermal state of temperature $T<T_m$, where $T_m=mc^2/k$ sets a natural temperature scale for the model, beyond which its applicability might be doubtful. Here, $k$ is Boltzmann's constant. In this regime,
the expectation value of the Wick square is
\begin{equation}
\label{eqn:estim}
\langle {:}\phi^2{:}\rangle_T \sim  C_n \hbar c\left(\frac{kT_m}{\hbar c}\right)^{n-2}
(T/T_m)^{(n-2)/2} K_{(n-2)/2}(T_m/T) ,
\end{equation}
where the numerical constant is 
\begin{equation}
C_n=\frac{S_{n-2}2^{(n-2)/2}}{(2\pi)^{n-1}\sqrt{\pi}} \Gamma((n-1)/2)
\end{equation} 
and takes the value $C_4=0.05$ in $n=4$ dimensions, for example. 
Therefore 
\begin{equation}
\frac{G\langle {:}\phi^2{:}\rangle_T}{c^4}\sim  C_n\left(\frac{T_m}{T_{\text{Pl}}}\right)^{n-2}(T/T_m)^{(n-2)/2} K_{(n-2)/2}(T_m/T),
\end{equation}
where $T_{\text{Pl}}$ is the Planck temperature (in $n$ dimensions). The derivation of this estimate is given in Appendix \ref{app:temp}.
If we take $\fmax^2\sim\langle {:}\phi^2{:}\rangle_T$ then 
the factor $Q$ is reasonably assumed to be given by
\begin{equation}
Q\sim  (m/m_{\textrm{Pl}})^{n-2}(T/T_m)^{(n-2)/2} K_{(n-2)/2}(T_m/T)
\end{equation}
where $m_{\textrm{Pl}}$ is the Planck mass. Maintaining the previous expectation that
$\fmax'/\fmax\sim mc$, and using $T$ to parameterise the maximum acceptable field amplitude in this way leads to a contraction estimate
\be
\theta_0 \lesssim -\frac{mc^2}{\hbar}Q^{1/2} \sim -\frac{mc^2}{\hbar} (m/m_{\textrm{Pl}})^{(n-2)/2}  (T/T_m)^{(n-2)/4} (K_{(n-2)/2}(T_m/T) )^{1/2} \,,
\ee
to guarantee geodesic incompleteness.
For a pion in $n=4$, the leading factor on the right-hand side $m/m_{\textrm{Pl}}=5.9 \times 10^{-20}$, while $mc^2/ \hbar$, as we mentioned, is $2\times 10^{23} \textrm{s}^{-1}$ and $T_m=1.7 \times 10^{12}\textrm{K}$. However, the remaining factor decays very rapidly as $T/T_m$ is reduced below $1$. For example, if $T=10^{-2}T_m$ then $(T/T_m)^{(n-2)/4} (K_{(n-2)/2}(T_m/T) )^{1/2}=6.8 \times 10^{-24}$, thus bringing the overall contraction needed down to the order of $10^{-19} \textrm{s}^{-1}$. This is indeed very small: If a volume were subject to contraction at this constant fractional rate, it would require approximately 100 times the current age of the universe to halve its size. Our calculation has allowed for a maximum temperature scale of $T=10^{10}\textrm{K}$ -- the temperature of the universe approximately one second after the Big Bang. Repeating the calculation for the Higgs mass $125\textrm{GeV}/c^2$, 
a minimum contraction of order $10^{-14}\textrm{s}^{-1}$ is required, but this time allowing 
a temperature up to $T=10^{13}\textrm{K}$, the temperature of the Universe  at age $0.0001\textrm{s}^{-1}$.

Summarising this discussion: a model in which the field mass is taken equal to 
an elementary particle mass would need very little initial contraction to guarantee that either there is geodesic incompleteness
or that, at the least, the solution evolves to a situation where the natural energy scales associated with the field approach those of the early universe. At this stage, a macroscopic observer might be forgiven for believing that a singularity had occurred!

Hawking and Ellis \cite{HawkingEllis:1973} discuss the violation of an average SEC by a pion. Their heuristic analysis led them to argue that the convergence of timelike geodesics would not be influenced by SEC violation on scales greater than $10^{-14} \textrm{m}$. By contrast, our analysis instead rigorously shows that even for extremely small initial contractions a singularity is inevitable. 

The obvious extension of this work is the derivation of a quantum strong energy inequality for the non-minimally coupled field. To examine if such an inequality can be used to prove a Hawking-type singularity theorem, it is also necessary to find estimates on the timescales for averaging over which the curved spacetime approximates the Minkowski results. 

It should be noted that to fully understand whether the dynamics are driven towards
singularities at the semi-classical level requires a semiclassical analysis that takes backreaction into account in a dynamical way. Positive results include \cite{Drago:2014eoa}, which calculates geometrical fluctuations from (passive) quantum fields, and a result on ANEC with transverse smearing \cite{Flanagan:1996gw}. However, the calculation of backreaction, even in the second order in perturbation theory brings significant technical challenges to the problem. Finally, the question of whether full quantum gravity can resolve singularities via a ``big bounce'' or not remains open.

\section*{Acknowledgments}
We thank Roger Colbeck for comments on the text. This work is part of a project that has received funding from the European Union's Horizon 2020 research and innovation programme under the Marie Sk\l odowska-Curie grant agreement No. 744037 ``QuEST''. P.J.B.\ thanks the WW Smith Fund for their financial support.

\appendix

\section{Temperature dependence of Wick square}
\label{app:temp}

 Consider the KMS state of the Klein--Gordon field with mass $m$ in $n$-dimensional Minkowski space for $n> 3$ (or $m>0$ if $n=3$) corresponding to a thermal equilibrium state at temperature $T$. The two-point function of that state is
\be
W^{(2)}_T(t,\mathbf{x},t',\mathbf{x}')=\int d\mu(\mathbf{k}) \left( \frac{e^{-i((t-t')\omega(\mathbf{k})-(\mathbf{x}-\mathbf{x}') \mathbf{k})}}{1-e^{-\beta\omega(\mathbf{k})}}+\frac{e^{i((t-t')\omega(\mathbf{k})-(\mathbf{x}-\mathbf{x}') \mathbf{k})}}{e^{\beta\omega(\mathbf{k})}-1}  \right) \,,
\ee
where $\beta=(kT)^{-1}$ and $\mu(\mathbf{k})$ is given by 
\be
\label{eqn:measure}
d\mu(\bfk)=\int \frac{d^{n-1}\bfk }{(2\pi)^{n-1}}\frac{1}{2\omega(\bfk)} \,,
\ee
 with $\omega(\mathbf{k})=\sqrt{\mathbf{k}^2+m^2}$. Here we use $\hbar=c=1$. After renormalizing with the ground state two-point function
 \be
 \label{eqn:vac}
 W^{(2)}_0(t,\mathbf{x},t',\mathbf{x}')=\int d\mu (\mathbf{k}) e^{-i[(t-t')\omega(\mathbf{k})-(\mathbf{x}-\mathbf{x}')\mathbf{k}]} \,,
 \ee
 in the coincidence limit we find
\be
\langle {:}\phi^2{:}\rangle_T =[ W^{(2)}_T - W^{(2)}_0]_c
= \frac{2 S_{n-2}}{(2\pi)^{n-1}} \int_m^\infty d\omega (\omega^2-m^2)^{(n-3)/2} \frac{1}{e^{\beta \omega(k)}-1} \,,
\ee
where $S_{n-2}$ is the area of the unit $(n-2)$-sphere. Changing variables to $x= \omega/m$ gives
\be \label{eqn:betaone}
\langle {:}\phi^2{:}\rangle_T=\frac{2 S_{n-2}m^{n-2}}{(2\pi)^{n-1}} \int_1^\infty dx (x^2-1)^{(n-3)/2} \frac{1}{e^{\beta x/m}-1}  \,.
\ee 
Reinserting units gives
	\begin{equation}
	\langle {:}\phi^2{:}\rangle_T =  \hbar c\frac{S_{n-2}(k T_m/(\hbar c))^{n-2}}{(2\pi)^{n-1}}\int_{1}^\infty dx\frac{(x^2-1)^{(n-3)/2}}{e^{T_m x/T}-1} \,.
	\end{equation}
	
	As we are interested in $T<T_m$, we may expand the integrand using a geometric series 
	\begin{equation}
	\langle {:}\phi^2{:}\rangle_T = \hbar c\frac{S_{n-2}(k T_m/(\hbar c))^{n-2}}{(2\pi)^{n-1}}\int_{1}^\infty dx \sum_{r=1}^\infty (x^2-1)^{(n-3)/2}e^{-rT_m x/T}\,,
	\end{equation}
	and exchanging sum and integral (which is valid) one obtains
	\begin{equation}
	\langle {:}\phi^2{:}\rangle_T =\hbar c \frac{S_{n-2}(k T_m/(\hbar c))^{n-2}\Gamma((n-1)/2)}{(2\pi)^{n-1}\sqrt{\pi}}  \sum_{r=1}^\infty 
	\left(\frac{2T}{rT_m}\right)^{(n-2)/2} K_{(n-2)/2}(rT_m/T) \,,
	\end{equation}
	in terms of modified Bessel functions. For order of magnitude purposes, the first term in the sum is perfectly adequate for $T<T_m$. This becomes more obvious if we subtract the first term from the sum
	\be
	\frac{1}{e^{T_m x/T}-1}-\frac{1}{e^{T_m x/T}} \leq 2 e^{-2 T_m x/T} 
	\,,
	\ee
	which is double the second term of the series. This gives the estimate of Eq.~\eqref{eqn:estim}.

\bibliography{class}

\begin{thebibliography}{31}%
\makeatletter
\providecommand \@ifxundefined [1]{%
 \@ifx{#1\undefined}
}%
\providecommand \@ifnum [1]{%
 \ifnum #1\expandafter \@firstoftwo
 \else \expandafter \@secondoftwo
 \fi
}%
\providecommand \@ifx [1]{%
 \ifx #1\expandafter \@firstoftwo
 \else \expandafter \@secondoftwo
 \fi
}%
\providecommand \natexlab [1]{#1}%
\providecommand \enquote  [1]{``#1''}%
\providecommand \bibnamefont  [1]{#1}%
\providecommand \bibfnamefont [1]{#1}%
\providecommand \citenamefont [1]{#1}%
\providecommand \href@noop [0]{\@secondoftwo}%
\providecommand \href [0]{\begingroup \@sanitize@url \@href}%
\providecommand \@href[1]{\@@startlink{#1}\@@href}%
\providecommand \@@href[1]{\endgroup#1\@@endlink}%
\providecommand \@sanitize@url [0]{\catcode `\\12\catcode `\$12\catcode
  `\&12\catcode `\#12\catcode `\^12\catcode `\_12\catcode `\%12\relax}%
\providecommand \@@startlink[1]{}%
\providecommand \@@endlink[0]{}%
\providecommand \url  [0]{\begingroup\@sanitize@url \@url }%
\providecommand \@url [1]{\endgroup\@href {#1}{\urlprefix }}%
\providecommand \urlprefix  [0]{URL }%
\providecommand \Eprint [0]{\href }%
\providecommand \doibase [0]{http://dx.doi.org/}%
\providecommand \selectlanguage [0]{\@gobble}%
\providecommand \bibinfo  [0]{\@secondoftwo}%
\providecommand \bibfield  [0]{\@secondoftwo}%
\providecommand \translation [1]{[#1]}%
\providecommand \BibitemOpen [0]{}%
\providecommand \bibitemStop [0]{}%
\providecommand \bibitemNoStop [0]{.\EOS\space}%
\providecommand \EOS [0]{\spacefactor3000\relax}%
\providecommand \BibitemShut  [1]{\csname bibitem#1\endcsname}%
\let\auto@bib@innerbib\@empty
\bibitem [{\citenamefont {Penrose}(1965)}]{Penrose_prl:1965}%
  \BibitemOpen
  \bibfield  {author} {\bibinfo {author} {\bibfnamefont {R.}~\bibnamefont
  {Penrose}},\ }\href {\doibase 10.1103/PhysRevLett.14.57} {\bibfield
  {journal} {\bibinfo  {journal} {Phys. Rev. Lett.}\ }\textbf {\bibinfo
  {volume} {14}},\ \bibinfo {pages} {57} (\bibinfo {year} {1965})}\BibitemShut
  {NoStop}%
\bibitem [{\citenamefont {Hawking}(1966)}]{Hawking:1966sx}%
  \BibitemOpen
  \bibfield  {author} {\bibinfo {author} {\bibfnamefont {S.}~\bibnamefont
  {Hawking}},\ }\href {\doibase 10.1098/rspa.1966.0221} {\bibfield  {journal}
  {\bibinfo  {journal} {Proc. Roy. Soc. Lond.}\ }\textbf {\bibinfo {volume}
  {A294}},\ \bibinfo {pages} {511} (\bibinfo {year} {1966})}\BibitemShut
  {NoStop}%
\bibitem [{\citenamefont {Heckmann}\ and\ \citenamefont
  {Sch\"ucking}(1962)}]{HeckmannSchuecking:1962}%
  \BibitemOpen
  \bibfield  {author} {\bibinfo {author} {\bibfnamefont {O.}~\bibnamefont
  {Heckmann}}\ and\ \bibinfo {author} {\bibfnamefont {E.}~\bibnamefont
  {Sch\"ucking}},\ }in\ \href@noop {} {\emph {\bibinfo {booktitle}
  {Gravitation: {A}n introduction to current research}}}\ (\bibinfo
  {publisher} {Wiley, New York},\ \bibinfo {year} {1962})\ pp.\ \bibinfo
  {pages} {438--469}\BibitemShut {NoStop}%
\bibitem [{\citenamefont {Ehlers}(1993)}]{Ehlers:1993}%
  \BibitemOpen
  \bibfield  {author} {\bibinfo {author} {\bibfnamefont {J.}~\bibnamefont
  {Ehlers}},\ }\href {\doibase 10.1007/BF00759031} {\bibfield  {journal}
  {\bibinfo  {journal} {Gen. Relativity Gravitation}\ }\textbf {\bibinfo
  {volume} {25}},\ \bibinfo {pages} {1225} (\bibinfo {year} {1993})},\ \bibinfo
  {note} {translation from German of an article originally published in Akad.
  Wiss. Lit. Mainz Abh. Math.-Nat. Kl., Nr. 11, 792--837 (1961).}\BibitemShut
  {Stop}%
\bibitem [{\citenamefont {Whittaker}(1935)}]{Whittaker:1935}%
  \BibitemOpen
  \bibfield  {author} {\bibinfo {author} {\bibfnamefont {E.~T.}\ \bibnamefont
  {Whittaker}},\ }\href {\doibase 10.1098/rspa.1935.0069} {\bibfield  {journal}
  {\bibinfo  {journal} {Proc. Roy. Soc. Lond.}\ }\textbf {\bibinfo {volume}
  {A149}},\ \bibinfo {pages} {384} (\bibinfo {year} {1935})}\BibitemShut
  {NoStop}%
\bibitem [{\citenamefont {Synge}(1937)}]{Synge:1937}%
  \BibitemOpen
  \bibfield  {author} {\bibinfo {author} {\bibfnamefont {J.~L.}\ \bibnamefont
  {Synge}},\ }\href {\doibase 10.1017/S0013091500008348} {\bibfield  {journal}
  {\bibinfo  {journal} {Proceedings of the Edinburgh Mathematical Society}\
  }\textbf {\bibinfo {volume} {5}},\ \bibinfo {pages} {93–102} (\bibinfo
  {year} {1937})}\BibitemShut {NoStop}%
\bibitem [{\citenamefont {Pirani}(2009)}]{Pirani:2009}%
  \BibitemOpen
  \bibfield  {author} {\bibinfo {author} {\bibfnamefont {F.~A.~E.}\
  \bibnamefont {Pirani}},\ }\href {\doibase 10.1007/s10714-009-0787-9}
  {\bibfield  {journal} {\bibinfo  {journal} {Gen. Relativity Gravitation}\
  }\textbf {\bibinfo {volume} {41}},\ \bibinfo {pages} {1215} (\bibinfo {year}
  {2009})},\ \bibinfo {note} {republication of Acta Physica Polonica
  \textbf{15}, 389-–405 (1956).}\BibitemShut {Stop}%
\bibitem [{\citenamefont {Senovilla}(1998)}]{Senovilla:2014}%
  \BibitemOpen
  \bibfield  {author} {\bibinfo {author} {\bibfnamefont {J.~M.~M.}\
  \bibnamefont {Senovilla}},\ }\href {\doibase 10.1023/A:1018801101244}
  {\bibfield  {journal} {\bibinfo  {journal} {Gen. Relativity Gravitation}\
  }\textbf {\bibinfo {volume} {30}},\ \bibinfo {pages} {701} (\bibinfo {year}
  {1998})},\ \Eprint {http://arxiv.org/abs/1801.04912} {arXiv:1801.04912
  [gr-qc]} \BibitemShut {NoStop}%
\bibitem [{\citenamefont {Epstein}\ \emph {et~al.}(1965)\citenamefont
  {Epstein}, \citenamefont {Glaser},\ and\ \citenamefont
  {Jaffe}}]{Epstein:1965zza}%
  \BibitemOpen
  \bibfield  {author} {\bibinfo {author} {\bibfnamefont {H.}~\bibnamefont
  {Epstein}}, \bibinfo {author} {\bibfnamefont {V.}~\bibnamefont {Glaser}}, \
  and\ \bibinfo {author} {\bibfnamefont {A.}~\bibnamefont {Jaffe}},\ }\href
  {\doibase 10.1007/BF02749799} {\bibfield  {journal} {\bibinfo  {journal}
  {Nuovo Cim.}\ }\textbf {\bibinfo {volume} {36}},\ \bibinfo {pages} {1016}
  (\bibinfo {year} {1965})}\BibitemShut {NoStop}%
\bibitem [{\citenamefont {Curiel}(2017)}]{Curiel:2017}%
  \BibitemOpen
  \bibfield  {author} {\bibinfo {author} {\bibfnamefont {E.}~\bibnamefont
  {Curiel}},\ }in\ \href@noop {} {\emph {\bibinfo {booktitle} {{Towards a
  Theory of Spacetime Theories}}}},\ Vol.~\bibinfo {volume} {13},\ \bibinfo
  {editor} {edited by\ \bibinfo {editor} {\bibfnamefont {D.}~\bibnamefont
  {Lehmkuhl}}, \bibinfo {editor} {\bibfnamefont {G.}~\bibnamefont {Schiemann}},
  \ and\ \bibinfo {editor} {\bibfnamefont {E.}~\bibnamefont {Scholz}}}\
  (\bibinfo  {publisher} {Birkha\"user, Basel},\ \bibinfo {year} {2017})\ pp.\
  \bibinfo {pages} {43--104}\BibitemShut {NoStop}%
\bibitem [{\citenamefont {Mart{\'i}n-Moruno}\ and\ \citenamefont
  {Visser}(2017)}]{MartinMorunoVisser:2017}%
  \BibitemOpen
  \bibfield  {author} {\bibinfo {author} {\bibfnamefont {P.}~\bibnamefont
  {Mart{\'i}n-Moruno}}\ and\ \bibinfo {author} {\bibfnamefont {M.}~\bibnamefont
  {Visser}},\ }\enquote {\bibinfo {title} {Classical and semi-classical energy
  conditions},}\ in\ \href {\doibase 10.1007/978-3-319-55182-1_9} {\emph
  {\bibinfo {booktitle} {Wormholes, Warp Drives and Energy Conditions}}},\
  \bibinfo {editor} {edited by\ \bibinfo {editor} {\bibfnamefont {F.~S.~N.}\
  \bibnamefont {Lobo}}}\ (\bibinfo  {publisher} {Springer International
  Publishing},\ \bibinfo {address} {Cham},\ \bibinfo {year} {2017})\ pp.\
  \bibinfo {pages} {193--213}\BibitemShut {NoStop}%
\bibitem [{\citenamefont {Tipler}(1978)}]{Tipler:1978zz}%
  \BibitemOpen
  \bibfield  {author} {\bibinfo {author} {\bibfnamefont {F.~J.}\ \bibnamefont
  {Tipler}},\ }\href {\doibase 10.1103/PhysRevD.17.2521} {\bibfield  {journal}
  {\bibinfo  {journal} {Phys. Rev.}\ }\textbf {\bibinfo {volume} {D17}},\
  \bibinfo {pages} {2521} (\bibinfo {year} {1978})}\BibitemShut {NoStop}%
\bibitem [{\citenamefont {Chicone}\ and\ \citenamefont
  {Ehrlich}(1980)}]{ChiconeEhrlich:1980}%
  \BibitemOpen
  \bibfield  {author} {\bibinfo {author} {\bibfnamefont {C.}~\bibnamefont
  {Chicone}}\ and\ \bibinfo {author} {\bibfnamefont {P.}~\bibnamefont
  {Ehrlich}},\ }\href {\doibase 10.1007/BF01303279} {\bibfield  {journal}
  {\bibinfo  {journal} {Manuscripta Math.}\ }\textbf {\bibinfo {volume} {31}},\
  \bibinfo {pages} {297} (\bibinfo {year} {1980})}\BibitemShut {NoStop}%
\bibitem [{\citenamefont {Borde}(1987)}]{Borde:1987b}%
  \BibitemOpen
  \bibfield  {author} {\bibinfo {author} {\bibfnamefont {A.}~\bibnamefont
  {Borde}},\ }\href {http://stacks.iop.org/0264-9381/4/343} {\bibfield
  {journal} {\bibinfo  {journal} {Classical Quantum Gravity}\ }\textbf
  {\bibinfo {volume} {4}},\ \bibinfo {pages} {343} (\bibinfo {year}
  {1987})}\BibitemShut {NoStop}%
\bibitem [{\citenamefont {Roman}(1988)}]{Roman:1988vv}%
  \BibitemOpen
  \bibfield  {author} {\bibinfo {author} {\bibfnamefont {T.~A.}\ \bibnamefont
  {Roman}},\ }\href {\doibase 10.1103/PhysRevD.37.546} {\bibfield  {journal}
  {\bibinfo  {journal} {Phys. Rev.}\ }\textbf {\bibinfo {volume} {D37}},\
  \bibinfo {pages} {546} (\bibinfo {year} {1988})}\BibitemShut {NoStop}%
\bibitem [{\citenamefont {Wald}\ and\ \citenamefont
  {Yurtsever}(1991)}]{Wald:1991xn}%
  \BibitemOpen
  \bibfield  {author} {\bibinfo {author} {\bibfnamefont {R.~M.}\ \bibnamefont
  {Wald}}\ and\ \bibinfo {author} {\bibfnamefont {U.}~\bibnamefont
  {Yurtsever}},\ }\href {\doibase 10.1103/PhysRevD.44.403} {\bibfield
  {journal} {\bibinfo  {journal} {Phys. Rev.}\ }\textbf {\bibinfo {volume}
  {D44}},\ \bibinfo {pages} {403} (\bibinfo {year} {1991})}\BibitemShut
  {NoStop}%
\bibitem [{\citenamefont {Borde}(1994)}]{Borde:1994}%
  \BibitemOpen
  \bibfield  {author} {\bibinfo {author} {\bibfnamefont {A.}~\bibnamefont
  {Borde}},\ }\href {\doibase 10.1103/PhysRevD.50.3692} {\bibfield  {journal}
  {\bibinfo  {journal} {Phys. Rev. D (3)}\ }\textbf {\bibinfo {volume} {50}},\
  \bibinfo {pages} {3692} (\bibinfo {year} {1994})}\BibitemShut {NoStop}%
\bibitem [{\citenamefont {Fewster}\ and\ \citenamefont
  {Galloway}(2011)}]{Fewster:2010gm}%
  \BibitemOpen
  \bibfield  {author} {\bibinfo {author} {\bibfnamefont {C.~J.}\ \bibnamefont
  {Fewster}}\ and\ \bibinfo {author} {\bibfnamefont {G.~J.}\ \bibnamefont
  {Galloway}},\ }\href {\doibase 10.1088/0264-9381/28/12/125009} {\bibfield
  {journal} {\bibinfo  {journal} {Class. Quant. Grav.}\ }\textbf {\bibinfo
  {volume} {28}},\ \bibinfo {pages} {125009} (\bibinfo {year} {2011})},\
  \Eprint {http://arxiv.org/abs/1012.6038} {arXiv:1012.6038 [gr-qc]}
  \BibitemShut {NoStop}%
\bibitem [{\citenamefont {Galloway}(1981)}]{Galloway:1981}%
  \BibitemOpen
  \bibfield  {author} {\bibinfo {author} {\bibfnamefont {G.~J.}\ \bibnamefont
  {Galloway}},\ }\href {\doibase 10.1007/BF01168457} {\bibfield  {journal}
  {\bibinfo  {journal} {Manuscripta Math.}\ }\textbf {\bibinfo {volume} {35}},\
  \bibinfo {pages} {209} (\bibinfo {year} {1981})}\BibitemShut {NoStop}%
\bibitem [{\citenamefont {Fewster}(2017)}]{Fewster2017QEIs}%
  \BibitemOpen
  \bibfield  {author} {\bibinfo {author} {\bibfnamefont {C.~J.}\ \bibnamefont
  {Fewster}},\ }\enquote {\bibinfo {title} {Quantum energy inequalities},}\ in\
  \href {\doibase 10.1007/978-3-319-55182-1_10} {\emph {\bibinfo {booktitle}
  {Wormholes, Warp Drives and Energy Conditions}}},\ \bibinfo {editor} {edited
  by\ \bibinfo {editor} {\bibfnamefont {F.~S.~N.}\ \bibnamefont {Lobo}}}\
  (\bibinfo  {publisher} {Springer International Publishing},\ \bibinfo
  {address} {Cham},\ \bibinfo {year} {2017})\ pp.\ \bibinfo {pages}
  {215--254}\BibitemShut {NoStop}%
\bibitem [{\citenamefont {Fewster}\ and\ \citenamefont
  {Osterbrink}(2006)}]{Fewster:2006ti}%
  \BibitemOpen
  \bibfield  {author} {\bibinfo {author} {\bibfnamefont {C.~J.}\ \bibnamefont
  {Fewster}}\ and\ \bibinfo {author} {\bibfnamefont {L.~W.}\ \bibnamefont
  {Osterbrink}},\ }\href {\doibase 10.1103/PhysRevD.74.044021} {\bibfield
  {journal} {\bibinfo  {journal} {Phys. Rev.}\ }\textbf {\bibinfo {volume}
  {D74}},\ \bibinfo {pages} {044021} (\bibinfo {year} {2006})},\ \Eprint
  {http://arxiv.org/abs/gr-qc/0606009} {arXiv:gr-qc/0606009 [gr-qc]}
  \BibitemShut {NoStop}%
\bibitem [{\citenamefont {Fewster}\ and\ \citenamefont
  {Osterbrink}(2008)}]{Fewster:2007ec}%
  \BibitemOpen
  \bibfield  {author} {\bibinfo {author} {\bibfnamefont {C.~J.}\ \bibnamefont
  {Fewster}}\ and\ \bibinfo {author} {\bibfnamefont {L.~W.}\ \bibnamefont
  {Osterbrink}},\ }\href {\doibase 10.1088/1751-8113/41/2/025402} {\bibfield
  {journal} {\bibinfo  {journal} {J. Phys.}\ }\textbf {\bibinfo {volume}
  {A41}},\ \bibinfo {pages} {025402} (\bibinfo {year} {2008})},\ \Eprint
  {http://arxiv.org/abs/0708.2450} {arXiv:0708.2450 [gr-qc]} \BibitemShut
  {NoStop}%
\bibitem [{\citenamefont {{Lesourd}}(2017)}]{Lesourd:2017}%
  \BibitemOpen
  \bibfield  {author} {\bibinfo {author} {\bibfnamefont {M.}~\bibnamefont
  {{Lesourd}}},\ }\href@noop {} {\bibfield  {journal} {\bibinfo  {journal}
  {ArXiv e-prints}\ } (\bibinfo {year} {2017})},\ \Eprint
  {http://arxiv.org/abs/1711.06480} {arXiv:1711.06480 [gr-qc]} \BibitemShut
  {NoStop}%
\bibitem [{\citenamefont {Misner}\ \emph {et~al.}(1973)\citenamefont {Misner},
  \citenamefont {Thorne},\ and\ \citenamefont {Wheeler}}]{MTW}%
  \BibitemOpen
  \bibfield  {author} {\bibinfo {author} {\bibfnamefont {C.~W.}\ \bibnamefont
  {Misner}}, \bibinfo {author} {\bibfnamefont {K.}~\bibnamefont {Thorne}}, \
  and\ \bibinfo {author} {\bibfnamefont {J.}~\bibnamefont {Wheeler}},\
  }\href@noop {} {\emph {\bibinfo {title} {Gravitation}}}\ (\bibinfo
  {publisher} {W. H. Freeman},\ \bibinfo {address} {San Francisco},\ \bibinfo
  {year} {1973})\BibitemShut {NoStop}%
\bibitem [{\citenamefont {Fewster}\ and\ \citenamefont
  {Teo}(1999)}]{Fewster:1998xn}%
  \BibitemOpen
  \bibfield  {author} {\bibinfo {author} {\bibfnamefont {C.~J.}\ \bibnamefont
  {Fewster}}\ and\ \bibinfo {author} {\bibfnamefont {E.}~\bibnamefont {Teo}},\
  }\href {\doibase 10.1103/PhysRevD.59.104016} {\bibfield  {journal} {\bibinfo
  {journal} {Phys. Rev.}\ }\textbf {\bibinfo {volume} {D59}},\ \bibinfo {pages}
  {104016} (\bibinfo {year} {1999})},\ \Eprint
  {http://arxiv.org/abs/gr-qc/9812032} {arXiv:gr-qc/9812032 [gr-qc]}
  \BibitemShut {NoStop}%
\bibitem [{\citenamefont {Barcelo}\ and\ \citenamefont
  {Visser}(2000)}]{Barcelo:2000zf}%
  \BibitemOpen
  \bibfield  {author} {\bibinfo {author} {\bibfnamefont {C.}~\bibnamefont
  {Barcelo}}\ and\ \bibinfo {author} {\bibfnamefont {M.}~\bibnamefont
  {Visser}},\ }\href {\doibase 10.1088/0264-9381/17/18/318} {\bibfield
  {journal} {\bibinfo  {journal} {Class. Quant. Grav.}\ }\textbf {\bibinfo
  {volume} {17}},\ \bibinfo {pages} {3843} (\bibinfo {year} {2000})},\ \Eprint
  {http://arxiv.org/abs/gr-qc/0003025} {arXiv:gr-qc/0003025 [gr-qc]}
  \BibitemShut {NoStop}%
\bibitem [{\citenamefont {Ford}\ and\ \citenamefont
  {Roman}(1999)}]{Ford:1999qv}%
  \BibitemOpen
  \bibfield  {author} {\bibinfo {author} {\bibfnamefont {L.~H.}\ \bibnamefont
  {Ford}}\ and\ \bibinfo {author} {\bibfnamefont {T.~A.}\ \bibnamefont
  {Roman}},\ }\href {\doibase 10.1103/PhysRevD.60.104018} {\bibfield  {journal}
  {\bibinfo  {journal} {Phys. Rev.}\ }\textbf {\bibinfo {volume} {D60}},\
  \bibinfo {pages} {104018} (\bibinfo {year} {1999})},\ \Eprint
  {http://arxiv.org/abs/gr-qc/9901074} {arXiv:gr-qc/9901074 [gr-qc]}
  \BibitemShut {NoStop}%
\bibitem [{\citenamefont {Fewster}\ and\ \citenamefont
  {Teo}(2000)}]{Fewster:1999kr}%
  \BibitemOpen
  \bibfield  {author} {\bibinfo {author} {\bibfnamefont {C.~J.}\ \bibnamefont
  {Fewster}}\ and\ \bibinfo {author} {\bibfnamefont {E.}~\bibnamefont {Teo}},\
  }\href {\doibase 10.1103/PhysRevD.61.084012} {\bibfield  {journal} {\bibinfo
  {journal} {Phys. Rev.}\ }\textbf {\bibinfo {volume} {D61}},\ \bibinfo {pages}
  {084012} (\bibinfo {year} {2000})},\ \Eprint
  {http://arxiv.org/abs/gr-qc/9908073} {arXiv:gr-qc/9908073 [gr-qc]}
  \BibitemShut {NoStop}%
\bibitem [{\citenamefont {Hawking}\ and\ \citenamefont
  {Ellis}(1973)}]{HawkingEllis:1973}%
  \BibitemOpen
  \bibfield  {author} {\bibinfo {author} {\bibfnamefont {S.~W.}\ \bibnamefont
  {Hawking}}\ and\ \bibinfo {author} {\bibfnamefont {G.~F.~R.}\ \bibnamefont
  {Ellis}},\ }\href@noop {} {\emph {\bibinfo {title} {The large scale structure
  of space-time}}}\ (\bibinfo  {publisher} {Cambridge University Press,
  London-New York},\ \bibinfo {year} {1973})\BibitemShut {NoStop}%
\bibitem [{\citenamefont {Drago}\ and\ \citenamefont
  {Pinamonti}(2014)}]{Drago:2014eoa}%
  \BibitemOpen
  \bibfield  {author} {\bibinfo {author} {\bibfnamefont {N.}~\bibnamefont
  {Drago}}\ and\ \bibinfo {author} {\bibfnamefont {N.}~\bibnamefont
  {Pinamonti}},\ }\href {\doibase 10.1088/1751-8113/47/37/375202} {\bibfield
  {journal} {\bibinfo  {journal} {J. Phys.}\ }\textbf {\bibinfo {volume}
  {A47}},\ \bibinfo {pages} {375202} (\bibinfo {year} {2014})},\ \Eprint
  {http://arxiv.org/abs/1402.4265} {arXiv:1402.4265 [math-ph]} \BibitemShut
  {NoStop}%
\bibitem [{\citenamefont {Flanagan}\ and\ \citenamefont
  {Wald}(1996)}]{Flanagan:1996gw}%
  \BibitemOpen
  \bibfield  {author} {\bibinfo {author} {\bibfnamefont {E.~E.}\ \bibnamefont
  {Flanagan}}\ and\ \bibinfo {author} {\bibfnamefont {R.~M.}\ \bibnamefont
  {Wald}},\ }\href {\doibase 10.1103/PhysRevD.54.6233} {\bibfield  {journal}
  {\bibinfo  {journal} {Phys. Rev.}\ }\textbf {\bibinfo {volume} {D54}},\
  \bibinfo {pages} {6233} (\bibinfo {year} {1996})},\ \Eprint
  {http://arxiv.org/abs/gr-qc/9602052} {arXiv:gr-qc/9602052 [gr-qc]}
  \BibitemShut {NoStop}%
\end{thebibliography}%

\end{document}